\documentclass{ifacconf}
\usepackage[utf8]{inputenc}   % for pdflatex
\usepackage{newunicodechar}
\newunicodechar{−}{-}   

\usepackage{graphicx}      % include this line if your document contains figures
\usepackage{natbib}        % required for bibliography
\usepackage{amssymb,amsfonts}
\usepackage{commath}
\usepackage{amsmath}
\usepackage{bm}
\usepackage{caption}
\usepackage{algpseudocode}
\usepackage{algorithm}
\usepackage{graphicx}
\usepackage{textcomp}
\usepackage{xcolor}
\usepackage{multirow}
\usepackage{multicol}
\usepackage{float}
\usepackage{booktabs}
\usepackage{stackengine}
\usepackage{etoolbox}
\usepackage{mathtools}
\usepackage{subcaption}
\usepackage{tabularx}
\usepackage{cases}

\newtheorem{theorem}{Theorem}
\newtheorem{definition}{Definition}

\newtheorem{corollary}{Corollary}
\newtheorem{lemma}{Lemma}

\newenvironment{proof}{%
	\par\noindent\textbf{Proof.}\ }{%
	\hfill$\square$\par
}

\newcommand{\Rmin}{R_{\min}}
\newcommand{\Lmin}{L_{\mathrm{min}}}
\newcommand{\Lmax}{L_{\mathrm{max}}}
\newcommand{\Lc}{L_0^{\mathrm{c}}}
\newcommand{\Lv}{L_0^{\mathrm{v}}}

\setlength{\abovedisplayskip}{3.5pt}
\setlength{\belowdisplayskip}{3.5pt}
\newcommand{\RNum}[1]{\uppercase\expandafter{\romannumeral#1\relax}}
\DeclareMathOperator{\sign}{sign}

\graphicspath{{figsIFAC2026/}}
%===============================================================================
\begin{document}
	\begin{frontmatter}
		
		\title{Variable $L_0$ Guidance Strategy: Enlarged Operational Envelope and Path-Following
			\thanksref{sponser}} 
		% Title, preferably not more than 10 words.
		
		% \thanks[sponser]{This research has been supported by the Italian Ministry of Enterprises and Made in Italy under grant "4DDS - 4D Drone Swarms" }
		
		\thanks[sponser]{This research has been supported by the Italian Ministry of Enterprises and Made in Italy under grant "4DDS - 4D Drone Swarms" and by  FCT/MCTES (PIDDAC), through project 2022.02801.PTDC-UPWIND-ATOL (https://doi.org/10.54499/2022.02801.PTDC) and grant 2021.07346.BD.}
		
		%        \author{Amit Shivam, Manuel C.R.M. Fernandes, Fernando A.C.C. Fontes, and Lorenzo Fagiano
			% \thanks{Amit Shivam and Lorenzo Fagiano are with Department of Electronics, Information, and Bioengineering, Polytechnic University of Milan, 20157, Italy.}}
		% Emails: 
		% {
			%         {\tt\small amit.shivam@polimi.it},
			%         {\tt\small lorenzo.fagiano@polimi.it}
			%         }%
		% Manuel C.R.M. Fernandes and Fernando A.C.C. Fontes are with SYSTEC-ISR-ARISE and the Department of Electrical and Computer Engineering, Faculty of Engineering, University of Porto, Rua Dr. Roberto Frias, 4200-465 Porto, Portugal. Emails: 
		%     {
			%         {\tt\small mcrmf@fe.up.pt},
			%         {\tt\small faf@fe.up.pt}
			%         }%
		% }}

\author[First]{Amit Shivam,} 
\author[First]{Manuel C.R.M. Fernandes,} 
\author[First]{Fernando A.C.C. Fontes,}
\author[Second]{Lorenzo Fagiano}

\address[First]{Department of Electrical and Computer Engineering, Faculdade de Engenharia, Universidade do Porto, Porto, Portugal \newline (e-mail: amitshivam@alum.iisc.ac.in,\{mcrmf, faf\}@fe.up.pt).}
\address[Second]{Department of Electronics, Information and Bioengineering, Polytechnic University of Milan, Italy \newline (e-mail: lorenzo.fagiano@polimi.it)}
\begin{abstract}                % Abstract of 50--100 words
   % This paper presents a geometric and theoretical analysis of an exponentially varying look-ahead parameter for UAV path-following guidance. Conventional guidance laws typically employ fixed look-ahead distances, which may lead to turn-rate saturation under large heading or cross-track deviations, resulting in inefficient maneuvers and increased control effort. Motivated by the need for guidance designs with a larger unsaturated operational envelope, we examine how a variable $L_0$ profile modifies the boundary between saturated and unsaturated behavior in the $(d,\eta)$ error space. The analysis shows that the proposed variable $L_0$ formulation significantly enlarges the unsaturated envelope compared to the constant $L_0$ approach, enabling smoother trajectories, earlier exit from saturation, and reduced control effort. Theoretical conditions describing both regions are derived using a UAV kinematic model, and simulation results on straight-line and elliptic paths validate the improved tracking performance. The findings highlight the practical benefits of the variable $L_0$ strategy for efficient and actuator-friendly path-following guidance.

   This paper presents a geometric and theoretical study of an exponentially varying look-ahead parameter for UAV path-following guidance. Conventional guidance laws with a fixed look-ahead distance often drive the vehicle into turn-rate saturation when the heading or cross-track error is large, leading to constrained maneuvers and higher control effort. The proposed variable $L_0$
 strategy reshapes the look-ahead profile so that the guidance command adapts to the evolving tracking error geometry. A detailed investigation shows that this adaptation significantly enlarges the region in which the commanded turn rate remains unsaturated, allowing the vehicle to operate smoothly over a broader range of error conditions. For representative settings, the unsaturated operational envelope increases by more than 70\% relative to the constant $L_0$
 formulation. These geometric insights translate to smoother trajectories, earlier recovery from saturation, and reduced control demand. Simulation studies on straight-line and elliptical paths demonstrate the merits of the variable look-ahead strategy, highlighting its control-efficient and reliable path-following performance.
\end{abstract}

\begin{keyword}
	Uncrewed Aerial Vehicles, Path-following, $L_0$ and $L_1$ guidance
\end{keyword}

\end{frontmatter}
%===============================================================================
\section{Introduction}

The rapid expansion of uncrewed aerial vehicles (UAVs) in security, environmental monitoring, medical logistics, and civilian automation has increased the need for reliable path-following guidance systems. These applications often require UAV to autonomously follow a desired geometric path while considering the vehicle’s kinematic and actuator constraints. Although several guidance laws provide stable convergence, most rely on fixed look-ahead distances or curvature parameters that do not adapt when the vehicle experiences large heading deviations or significant lateral offsets. Under such conditions, the commanded turn rate often saturates, leading to constrained maneuvers and elevated control effort. This underscores the importance of guidance strategies with a large operational envelope, that is, the range of tracking errors for which the controller can operate without saturating the actuators. Expanding this envelope directly improves tracking smoothness, energy usage, and actuator-feasible behavior. However, despite progress in UAV guidance design, the underlying geometric structure of this operational envelope and the factors governing its boundary  remains largely unexplored.

Several path-following strategies have been developed to address some of these issues. Waypoint-based navigation \cite{osborne2005waypoint,hota2014curvature} and control-theoretic approaches such as MPC \cite{Alessandretti2013}, LQR \cite{lee2010LQR}, and sliding-mode control \cite{shashiranjanAST} provide systematic frameworks for trajectory tracking. Vector-field methods \cite{frew2008coordinated,pothen2017curvature,goncalves2009artificial,kapitanyuk2017guiding,nelson2007,amitICUAS,shivam2023} and virtual-target or $L_1$-type guidance laws ($L_1$ is the distance between the vehicle and the virtual target on the desired path) \cite{breivik2005principles,park_new_nodate,park_performance_2007,curry2013l+,stastny2018l1} remain popular due to their computational simplicity and robustness. Variable look-ahead formulations have been explored for straight-line \cite{lekkas2012time} and spline-based paths \cite{lekkas2014integral}, while trajectory-shaping guidance laws have been proposed for curved path following \cite{ratnoo2015path}. In the later course of work, ~\cite{thakar2017tangential} developed lead-angle–based guidance logic, which further improves transient performance and settling times in comparison to \cite{park_performance_2007,ratnoo2015path}. The constant $L_0$
 guidance framework introduced in \cite{silva_path-following_nodate} enlarged the domain of attraction relative to the classical $L_1$ 
 setting and was later analyzed for curved paths~\cite{fernandes_path-following_2024}  when integrated with predictive control schemes~\cite{fernandes_model_2020}. Comparative studies such as \cite{l0l1_energies} further highlight the practical need for these variants.

 Although these methods provide meaningful improvements, the existing literature offers limited theoretical insight into how the look-ahead parameter—whether fixed or adaptive—shapes the boundary between saturated and unsaturated operation. In particular, a clear geometric characterization of the operational envelope, and the factors that determine the onset of saturation, lacks in the literature. 

 Motivated by this gap, the present work offers a geometric and theoretical study of an exponentially varying $L_0$
 profile and examines its impact on the path-following behavior of a UAV kinematic model. Building on preliminary observations reported in \cite{FERNANDES2025128}, we show that the proposed variable $L_0$
 formulation significantly enlarges the range of tracking conditions under which the guidance command remains unsaturated. This expanded operational envelope enables the controller to operate smoothly across a broader set of error scenarios, leading to earlier recovery from saturation, smoother trajectories, and reduced control effort. The geometric analysis provides explicit criteria that characterizes when saturation occurs and when it can be avoided. Simulation studies on straight-line and elliptical paths further illustrate the merits of the variable 
$L_0$ guidance strategy.

\par 

This paper is organized as follows. Section~\ref{sec:pf} discusses the problem formulation. The proposed varying $L_0$ guidance law is presented in Section~\ref{sec:proposedguidance}.  
Section~\ref{sec:simulations} discusses simulation results, and Section~\ref{sec:Conclusions} provides the concluding remarks.

\section{Problem Formulation}\label{sec:pf}
Consider a path-following scenario as illustrated in Fig.~\ref{fig: Path-following geometry}.
Let the UAV state be $(x,y)\in\mathbb{R}^2$ with heading angle $\psi\in(-\pi,\pi]$ and constant speed $V_u>0$.
The objective here is to develop a prospective virtual target-based guidance law that governs the UAV heading angle while pursuing a virtual target $T$ placed ahead along the path, eventually leading the UAV onto the desired path. 
%Here, $\eta$ is the velocity-to-LOS angle.

\begin{figure}[t]
\centering
\includegraphics[width=0.85\linewidth]{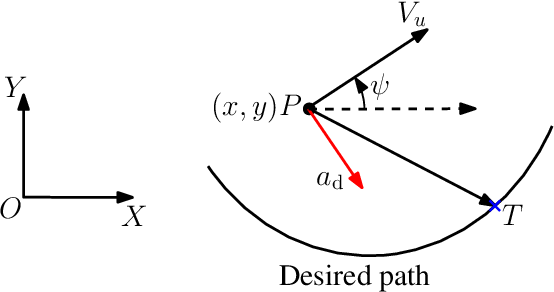}
\caption{Path-following geometry.}
\label{fig: Path-following geometry}
\end{figure}
The kinematic equations of motion are described by
\begin{equation}
\label{eq:kinematics}
\begin{aligned}
\dot{x} &= V_u \cos\psi, \qquad
\dot{y}  = V_u \sin\psi, \qquad
\dot{\psi} = \frac{a_{\mathrm d}}{V_u}.
\end{aligned}
\end{equation}
Here, $a_\mathrm{d}$ is the commanded lateral acceleration, whose 
upper bound  is expressed as
\begin{equation}
\begin{split}
   a_{\mathrm d} &\in \left[-\frac{V_u^2}{R_{\min}},\ \frac{V_u^2}{R_{\min}}\right]. 
\end{split}
\end{equation}
 The curvature $\kappa= 1/R_\mathrm{min}$ ($R_\mathrm{min} > 0$ is minimum turn radius) is related to the $a_\mathrm{d}$ and is saturated as
 \begin{equation}
\begin{split}
\kappa & =\frac{a_{\mathrm d}}{V_u^2}\in\left[-\frac{1}{R_{\min}},\ \frac{1}{R_{\min}}\right]. 
\end{split}
\label{eq:sat}
\end{equation}
Furthermore, path-following characteristics of the guidance method are evaluated using three performance indices, which are as follows:
% \begin{align}
% \begin{split}
%  \text{Settling time }
% t_{\mathrm s} = \inf\{\,t\ge 0:\ |d(t)|\le 0.05\text{ m}\ \} \ \text{in} \ \mathrm{s},\\
% \text{Control effort }
% J = \int_{0}^{t_f} a_{\mathrm d}^2(t)\,dt \ \text{in} \ [\mathrm{m}^2/\mathrm{s}^3],\\
% \text{Integral absolute error }
% \mathrm{IAE} = \int_{0}^{t_f} |d(t)|\,dt \ \text{in} \ [\mathrm{m}-\mathrm{s}].   
% \end{split}
% \label{eq:metrics}
% \end{align}

\begin{align}
\begin{split}
 \text{Settling time }
t_{\mathrm s} = \inf\{\,t\ge 0:\ |d(t)|\le \varepsilon\,\} \ \text{in} \ \mathrm{s},\\
\text{Control effort }
J = \int_{0}^{t_f} a_{\mathrm d}^2(t)\,dt \ \text{in} \ [\mathrm{m}^2/\mathrm{s}^3],\\
\text{Peak overshoot }
Mp_{\varepsilon}
=
\max_{t\ge t_{\varepsilon}}\!\big(-\,d(t)-\varepsilon\big)^{+},
\\
x^{+}=\max\{x,0\}.   
\end{split}
\label{eq:metrics}
\end{align}
Here,  \emph{peak overshoot} is the amount by which the trajectory goes above/below the lower band $\varepsilon$ after the first entry, and $t_f$ is a prescribed evaluation horizon.

\section{Varying $L_0$ Guidance Logic} \label{sec:proposedguidance}

In this section, we briefly discuss the existing framework of $L_0/L_1$ guidance logic, followed by geometrical and theoretical discussion on the proposed variable $L_0$ guidance logic.  
\subsection{Background}
Given the vehicle's position $P (x,y)$ defined in an inertial frame, a typical schematic of $L_0$ guidance logic is depicted in Fig.~\ref{fig:L0 guidance logic}. Therein, we determine the closest point on the path to the vehicle as $O$, and the corresponding cross-track error as $d$. From that point, we advance a previously defined distance $L_0$ towards a virtual target $T$.
The vector joining the current position of the vehicle and the reference point $T$ is defined as the vector $L_1$ (LOS vector), and the angle between $L_1$ and the vehicle's velocity $V_u$ is defined as the heading error $\eta$, which is obtained as
\begin{equation}
\eta = \sin^{-1} \left( \frac{V_u \times L_1}{|V_u||L_1|}\right).
\label{eq: eta computed}
\end{equation}
% These variables are depicted in Fig. \ref{fig:linear1}.
\begin{figure}
\centering
\includegraphics[width=0.6\linewidth]{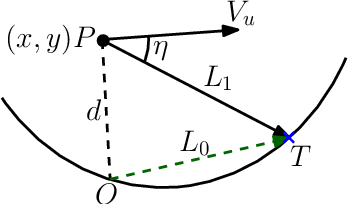}
\caption{Schematic of $L_0$ guidance logic.}
\label{fig:L0 guidance logic}
\end{figure} 

Following the guidance logic used in~\cite{fernandes_path-following_2024}, $L_1$ relates to the tuning parameter $L_0$ (constant) and the local path curvature $\kappa  $  at the closest point $O$ as
\begin{equation}
L_1 = \sqrt{d^2 + L_0^2\bigl(1+d\kappa\bigr)\,} \equiv L_1(d,\kappa),  \qquad L_1 < 2R.
\label{eq: L1 formula}
\end{equation}
% Let the constant $L_0$ be denoted as $\Lc(d)$ and subsequent guidance logic as $ L_1(d,\kappa) = L_1^{\mathrm{c}}(d,\kappa) $. Accordingly, Eq.\eqref{eq: L1 formula} can be rewritten as
% \begin{align}
%    L_1^{\mathrm{c}}(d,\kappa)&=\sqrt{d^2+L_0^{\mathrm{c}}(d)^2(1+d\kappa)} 
%    \label{eq: constant L0 guidance}
% \end{align}
For straight-line path, $\kappa=0$ and therefore Eq.\eqref{eq: L1 formula} simplifies to
% \begin{align}
%    L_1^{\mathrm{c}}(d,\kappa) = \sqrt{d^2+L_0^{\mathrm{c}}(d)^2} 
% \end{align}
\begin{align}
   L_1(d,\kappa) = \sqrt{d^2+L_0^2}
   \label{eq: straight line L0 guidance}
\end{align}
For further details, please see \cite{fernandes_path-following_2024}. While following the prescribed path, the centripetal acceleration required to follow an instantaneous circular arc formed by joining the UAV position and virtual target $T$ on the path is expressed as
\begin{equation} \label{GuidanceLogicEq}
a_{\mathrm{d}}=\frac{2 V_u^2 \sin(\eta)}{L_1(d,\kappa)} = \frac{2 V_u^2 \sin(\eta)}{\sqrt{d^2 + L_0(d)^2\bigl(1+d\kappa\bigr)\,}}
\end{equation}

\begin{figure}
    \centering
    \includegraphics[width=\linewidth]{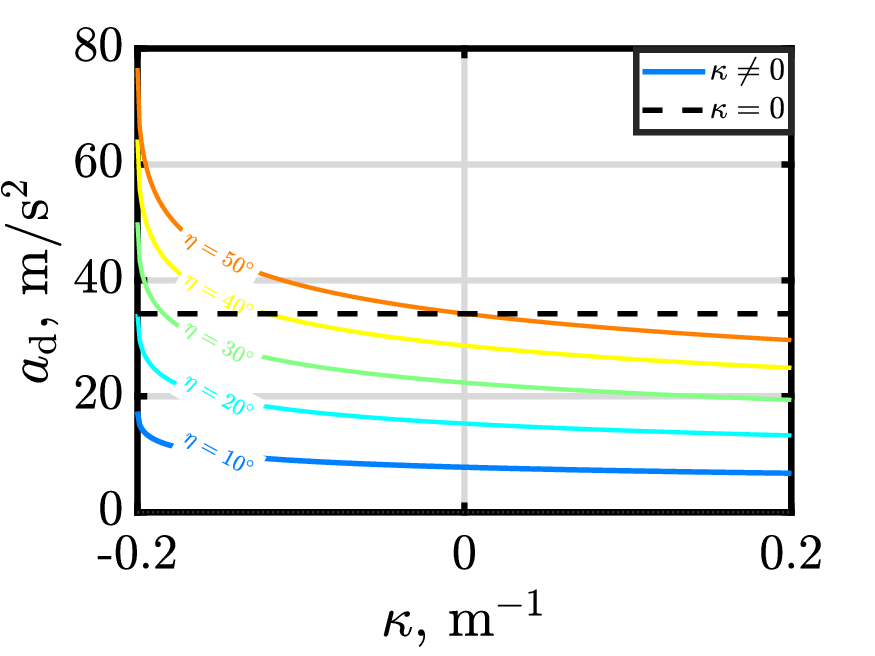}
    \caption{ Constant $L_0$ guidance with $(L_0, d) = (100, 50) $ m}
    \label{fig: constnat L0 versus L1 ad kappa}
\end{figure}
However, one must consider the vehicle's dynamic capability, which ultimately limits heading angle variations. In this regard, define the saturation boundary by constraining $\eta$ as
\begin{align}
    \bar\eta(d)=\arcsin\!\Big(\min\{1,\ L_1(d,\kappa)/(2R_\mathrm{min})\}\Big)\in(0,\pi/2)
    \label{eq: saturation boundary}
\end{align}
Set three regions as
\begin{equation}
\left\{
\begin{aligned}
& S_1 = \{ |\eta| \leq \bar{\eta}(d) \}                   &\qquad \text{(unsaturated)}   \\
& S_2 = \{ \eta > \bar{\eta}(d) \},                    \\
& S_3 = \{ \eta < -\bar{\eta}(d) \}                    &\qquad \text{(saturated)}
\end{aligned}
\right.
\label{eq: saturation and unsaturation region}
\end{equation}

Accordingly, the guidance command for the respective regions are expressed as
% \begin{equation}
% a_\mathrm{d} = 
% \begin{dcases}
% \dfrac{2V_u^{2}}{L_1(d,\kappa)} \sin \eta         & \text{if } (d,\eta) \in S_1, \\
% \dfrac{2V_u^{2}}{L_1(d,\kappa)} \sin \bar{\eta}   & \text{if } (d,\eta) \in S_2, \\
% -\dfrac{2V_u^{2}}{L_1(d,\kappa)} \sin \bar{\eta}  & \text{if } (d,\eta) \in S_3
% \end{dcases}
% \label{eq: accleration for saturation and unsaturation region}
% \end{equation}
\begin{equation}
a_\mathrm{d} = 
\begin{dcases}
\dfrac{2V_u^{2}}{L_1(d,\kappa)} \sin \eta         & \text{if } (d,\eta) \in S_1, \\
\dfrac{2V_u^{2}}{L_1(d,\kappa)} \sin \bar{\eta}\sign(\bar{\eta})   & \text{if } (d,\eta) \in S_2, S_3\\
\end{dcases}
\label{eq: accleration for saturation and unsaturation region}
\end{equation}
In practice, estimating the virtual target on the path from onboard measurements and implementing the corresponding guidance commands inevitably introduces a projection error. A practical way to mitigate this is to constrain that error within a prescribed bound. To this end, we work in the local Frenet frame at the closest path point $O$, project the virtual target $T$ along the tangent at $O$ as $T'$, and explicitly quantify the resulting projection error $TT'$, which then guides the control design. On this geometric basis, we develop the variable $L_0$ guidance law, its construction and properties in the subsequent section.

\subsection{Feasibility set and geometric bounds}
\begin{figure}
\centering
\includegraphics[width=\linewidth]{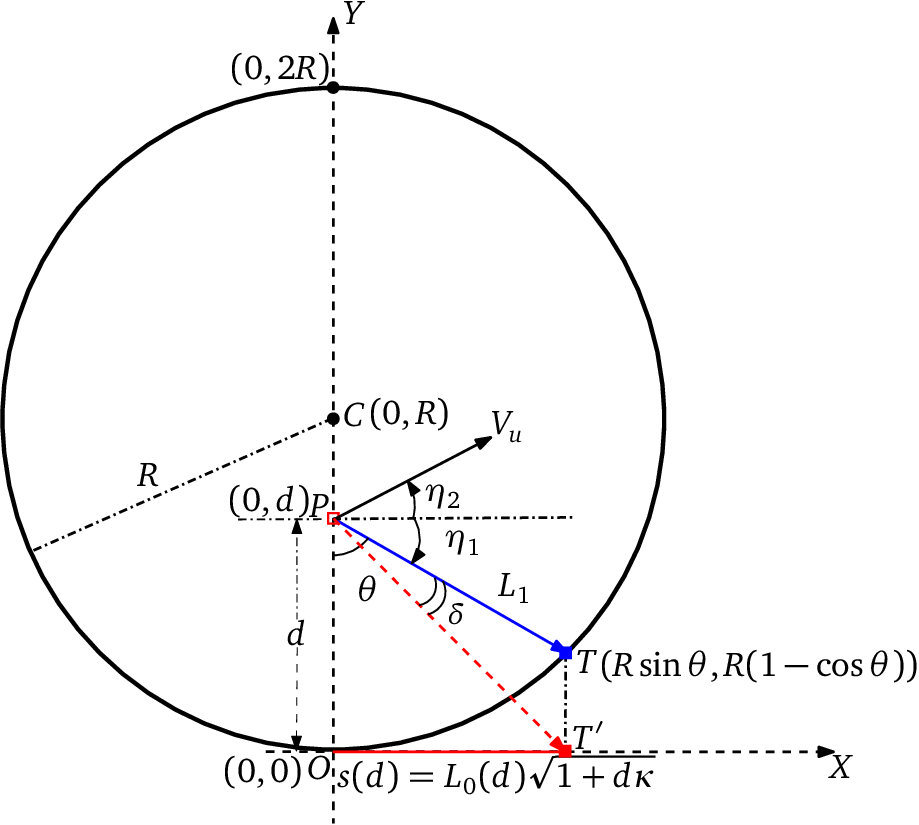}
\caption{$L_0$ feasibility study}
\label{fig:feasibility study}
\end{figure}
Consider $O\in\mathbb{R}^2$ be the closest point on the path, with tangent $OX$ and inward normal $OY$as shown in Fig.~\ref{fig:feasibility study}.
Let the vehicle be at $P=(0,d)$ and denote the local curvature at $O$ by $\kappa$ (radius $R=1/|\kappa|$).
We place a virtual target on the path ahead of $O$ by choosing a length $s$ along the tangent,
\begin{equation}
\label{eq:s-def}
s(d,\kappa)\ :=\ L_0(d)\,\sqrt{\,1+d\kappa\,}, \qquad 1+d\kappa>0,
\end{equation}
Thus, LOS vector is $L_1 = T'-P=(s,-d)$ and magnitude is computed as
\begin{equation}
\label{eq:L1-def}
L_1(d,\kappa)\ =\ \|T'-P\|\ =\ \sqrt{\,d^2 + L_0^2(d)\,(1+d\kappa)\,}.
\end{equation}
Accordingly, the projection error $\delta$ can be obtained as 
\begin{align}
\begin{split}
  \|\delta \|&=\|T-T'\|=\left\|\left(R\sin\theta-s\right),\ R\left(1-\cos\theta\right)\right\| \\
&\le \ \frac{|\kappa|\,s^2}{2}+O((\kappa s)^4)\   
\end{split}
\end{align}
To guarantee $\|\delta \|\le \varepsilon_{\mathrm{proj}}$, one may use
\begin{align}
s \ \le\  \sqrt{\tfrac{2\,\varepsilon_{\mathrm{proj}}}{|\kappa|}}\quad(\kappa\neq 0).    
\end{align}
A practical consideration for design is given as
\begin{align}
\begin{split}
  s &=\min\!\Big\{\,L_0\sqrt{1+d\kappa},\
\sqrt{\tfrac{2\,\varepsilon_{\mathrm{proj}}}{|\kappa|}}\Big\} 
\end{split}  
\end{align}
In this context, one can define the feasibility set while ensuring the virtual target lies ahead of the vehicle along the instantaneous circular arc.
\begin{definition}\label{def:feasible-set}
For any circular arc of radius $R$ and curvature $\kappa(s^\star)$ at the closest point $O$, define a feasibility set
\begin{align}
    \begin{split}
    \mathcal{F}\ :=\ \Big\{(d,\eta)\in\mathbb{R}\times(-\pi,\pi]:
\ 1+d\,\kappa(s^\star)>0, \\ \ L_1(d,\kappa)\le 2R-d \Big\}.    
    \end{split}
\end{align}
with $R=1/|\kappa(s^\star)|$ (and $R=\infty$ if $\kappa=0$), and
\begin{align}
\begin{split}
 L_1(d,\kappa)=\sqrt{\,d^2 + L_0^2\,\big(1+d\,\kappa(s^\star)\big)\,},
\\
s=L_0\sqrt{1+d\,\kappa(s^\star)}.   
\end{split}  
\end{align}

\end{definition}
Though the commanded lateral acceleration (unsaturated) is given by Eq.~\eqref{GuidanceLogicEq}, the local path curvature sensitivity with $a_\mathrm{d}$ is explained subsequently. 
\subsection{ Curvature sensitivity of $a_\mathrm{d}$}
For fixed $(V_u,\eta)$, using Eq.~\eqref{GuidanceLogicEq}
\begin{align}
\begin{split}
  \frac{\partial a_d}{\partial \kappa}\Big|_{\eta}
&= -\,\frac{2V_u^2\sin\eta}{L_1^2}\,\frac{\partial L_1}{\partial \kappa} 
= -\,\frac{2V_u^2\sin\eta}{L_1^2}\cdot \frac{L_0^2(d)\,d}{2L_1} \\
&= -\,\frac{V_u^2\,\sin\eta\,L_0^2(d)\,d}{\bigl(d^2+L_0^2(d)(1+d\kappa)\bigr)^{3/2}}.  
\end{split}
\end{align}
A dimensionless curvature sensitivity index (CSI) $\sigma$ is defined as
\begin{equation}
\sigma
=\frac{\kappa}{a_d}\frac{\partial a_d}{\partial \kappa}\Big|_{\eta}
= -\,\frac{\kappa\,L_0^2(d)\,d}{2\bigl(d^2+L_0^2(d)\,(1+d\kappa)\bigr).
\quad}
\label{eq: CSI sigma defined}
\end{equation}
Two useful limits are stated as follows:
\begin{enumerate}
   \item \textit{Small $d\kappa$ (first order, no $d/L_0$ assumption):}
\begin{align}
  \sigma \ \approx\ -\,\frac{\kappa\,L_0^2(d)\,d}{2\bigl(d^2+L_0^2(d)\bigr)}  
\end{align}
If in addition $|d|\ll L_0(d)$ (near-path), then
\begin{align}
    \sigma \ \approx\ -\,\frac{\kappa\,d}{2}\ 
\end{align}
\item \textit{Large $L_0$ (far field):}
\begin{align}
   \sigma \ \approx\ -\,\frac{\kappa d}{2(1+d\kappa)}\,,\qquad |\sigma|<\tfrac{1}{2} 
\end{align} 
\end{enumerate}
Thus Eq.~\eqref{GuidanceLogicEq} is curvature-sensitive, but the sensitivity is \emph{attenuated}
by increasing $L_0(d)$ far from the path.
For near-path linearization (small $|\eta|$), using $\sin\eta\simeq\eta = \frac{d}{L_1}$ and the standard small-angle $L_1$-geometry relation yields,
\begin{equation}
\label{eq:near-ad}
a_d \approx \frac{2V_u^2\,\eta}{L_1}
\approx \frac{2V_u^2\,d}{L_1^2}
= \;\frac{2V_u^2\,d}{d^2+L_0^2(d)\,(1+d\kappa)}\;.
\end{equation}
Thus using Eq.~\eqref{eq:near-ad},
\begin{equation}
\frac{\partial a_d}{\partial \kappa}
\approx -\,\frac{2V_u^2\,d^2\,L_0^2(d)}{\bigl(d^2+L_0^2(d)\,(1+d\kappa)\bigr)^{2}}
\
\end{equation}
which is always negative for $d\neq0$, and increasing $\kappa$ reduces $a_d$.
Fig.~\ref{fig: constnat L0 versus L1 ad kappa} illustrates that the variation in $a_\mathrm{d}$ is prominent with an increase in $|\kappa|$ while varying $\eta$.

\subsection{Guidance law}
In contrast to the constant $L_0$, we propose a varying $L_0$ profile  denoted as $\Lv(d)$, which is expressed as
\begin{align}
\Lv(d)&=\Lmin + (\Lmax-\Lmin)\!\left(1-e^{-|d|/d_c}\right) 
\label{eq: variable L0 guidance profile}
\end{align}
with parameters \(\Lmax>\Lmin>0\), \(d_c>0\). Table~\ref{tab:combined_effect} illustrate the effect of design parameter $(\Lmax,d_c)$ on the $L_0^{v}(d)$ and the resulting $a_\mathrm{d}$.
For a fair comparative analysis between the two guidance methods, we use constant $L_0$ denoted as $\Lc(d)$ as 
\begin{align}
 \Lc(d)&\equiv \Lmin  
\end{align}
The corresponding LOS lengths of the two guidance laws using Eq.~\eqref{eq: L1 formula} are expressed as
\begin{align}
L_1^{\mathrm{c}}(d,\kappa)&=\sqrt{d^2+{\Lmin}^2(1+d\kappa)}
\label{eq: L1 for constant L0},\\
L_1^{\mathrm{v}}(d,\kappa)&=\sqrt{d^2+{\Lv}^2(d)(1+d\kappa)}.
\label{eq: L1 for variable L0}
\end{align}
If $\Lv(d)\in[L_\mathrm{min},L_\mathrm{max}]$, with $L_\mathrm{min}>0$ and \(\Lv(d)\) is non decreasing in \(|d|\) then,
\begin{equation}
\label{eq:L1-minmax}
\sqrt{d^2+\Lmin^2(1+d\kappa)}\ \le\ L_1^{\mathrm{v}}(d,\kappa)\ \le\ \sqrt{d^2+\Lmax^2(1+d\kappa)}.
\end{equation}
\begin{lemma}[Monotone envelope]
\label{lem: monotone envelope}
For all feasible \((d,\kappa)\) with \(1+d\kappa>0\),
\begin{align}
L_1^{\mathrm{v}}(d,\kappa)\ \ge\ L_1^{\mathrm{c}}(d,\kappa).
\label{eq:  L1 variable greater than L1 constant}
\end{align}
\end{lemma}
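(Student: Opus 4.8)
The plan is to reduce the claimed inequality to the elementary fact that the varying profile $\Lv(d)$ never falls below the constant profile $\Lc(d)\equiv\Lmin$, and then to push this pointwise bound through the monotone algebraic expressions that define $L_1^{\mathrm c}$ and $L_1^{\mathrm v}$ in \eqref{eq: L1 for constant L0}--\eqref{eq: L1 for variable L0}. So the whole argument is essentially the lower half of the two-sided envelope already announced in \eqref{eq:L1-minmax}, made into a standalone statement.

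First I would establish $\Lv(d)\ge\Lmin$ for every $d\in\mathbb{R}$. This is immediate from the defining formula \eqref{eq: variable L0 guidance profile}: since $d_c>0$ we have $e^{-|d|/d_c}\in(0,1]$, hence $1-e^{-|d|/d_c}\in[0,1)$, and because $\Lmax-\Lmin>0$ the correction term $(\Lmax-\Lmin)\bigl(1-e^{-|d|/d_c}\bigr)$ is non-negative. Both $\Lv(d)$ and $\Lmin$ being positive, squaring preserves the order, so $\Lv^2(d)\ge\Lmin^2\ge0$.

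Next I would invoke the feasibility hypothesis $1+d\kappa>0$. Multiplying $\Lv^2(d)\ge\Lmin^2$ by the positive scalar $(1+d\kappa)$ keeps the direction of the inequality, giving $\Lv^2(d)\,(1+d\kappa)\ge\Lmin^2\,(1+d\kappa)$. Adding $d^2$ to both sides and noting that both radicands are then non-negative, the monotonicity of $t\mapsto\sqrt{t}$ on $[0,\infty)$ yields
\[
L_1^{\mathrm v}(d,\kappa)=\sqrt{d^2+\Lv^2(d)(1+d\kappa)}\ \ge\ \sqrt{d^2+\Lmin^2(1+d\kappa)}=L_1^{\mathrm c}(d,\kappa),
\]
which is exactly \eqref{eq:  L1 variable greater than L1 constant}. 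Equality holds precisely when $\Lv^2(d)(1+d\kappa)=\Lmin^2(1+d\kappa)$, i.e. when $\Lv(d)=\Lmin$, which by \eqref{eq: variable L0 guidance profile} occurs only at $d=0$; away from the path the inequality is strict.

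There is no substantive obstacle here; the single point demanding care is the sign of $1+d\kappa$. If it were negative the multiplication step would flip the inequality, so the hypothesis $1+d\kappa>0$ (equivalently, membership in the feasibility set $\mathcal{F}$ of Definition~\ref{def:feasible-set}) is genuinely needed and cannot be dropped. The same hypothesis also guarantees that the radicands are non-negative, so that $L_1^{\mathrm c}$ and $L_1^{\mathrm v}$ are real and the square-root comparison is legitimate.
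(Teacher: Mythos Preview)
Your argument is correct and is essentially the same as the paper's: both rest on $\Lv(d)\ge\Lmin$ and the monotonicity of $x\mapsto\sqrt{d^2+x^2(1+d\kappa)}$ for $1+d\kappa>0$. Your version simply spells out the intermediate inequalities and the equality case that the paper's one-line proof leaves implicit.
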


\begin{proof}
Since mapping of $x\mapsto\sqrt{d^2+x^2(1+d\kappa)}$ is increasing when $1+d\kappa>0$ and $\Lv(d)\ge\Lmin$, hence leads to \eqref{eq:  L1 variable greater than L1 constant}.
\end{proof}
\begin{lemma}[Quantitative difference]
\label{lem:quant-gaps}
Subject to Lemma \ref{lem: monotone envelope} and LOS lengths $L_1^{\mathrm{c}}(d,\kappa)$, $L_1^{\mathrm{v}}(d,\kappa)$ given by Eq.\eqref{eq: L1 for constant L0} and \eqref{eq: L1 for variable L0}, respectively
then for each fixed $(d,\kappa)$
\begin{align}
\begin{split}
  \Delta L_1(d,\kappa)
& = L_1^{\mathrm{v}}(d,\kappa)-L_1^{\mathrm{c}}(d,\kappa) \\
&= \frac{{\Lv}^2(d)-{\Lmin}^2}{L_1^{\mathrm{v}}(d,\kappa) +L_1^{\mathrm{c}}(d,\kappa)}\,(1+d\kappa) \ \ge 0    
\end{split}
 \label{eq: difference in L1}
\end{align}
Thus, using Eq.~\eqref{eq: saturation boundary}, the difference in saturation region of heading error is deduced as
\begin{align}
    \Delta\bar\eta(d)
& = \bar\eta^{\mathrm v}(d)-\bar\eta^{\mathrm c}(d)
\ \ge\ \frac{\Delta L_1}{2\Rmin\sqrt{\,1-\bigl(L_1^{\mathrm v}/(2\Rmin)\bigr)^2\,}}.
\label{eq: heading error saturation derived}
\end{align}
Moreover, in the far field $|d|\gg d_c$ where $\Lv(d)\approx \Lmax$,
\begin{equation}
\Delta\bar\eta \;\approx\; \arcsin\!\frac{\Lmax}{2\Rmin} \;-\; \arcsin\!\frac{\Lmin}{2\Rmin},
 \label{eq: heading error farfield}
\end{equation}
and the time difference is 
\begin{align}
\Delta T_{\mathrm{far}} \;\approx\; -\,\frac{\Rmin}{V_u}\,\Delta\bar\eta.
\label{eq: time difference farfield}
\end{align}
\end{lemma}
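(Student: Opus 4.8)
I would establish the four claims of Lemma~\ref{lem:quant-gaps} in the order stated, each time leaning on Lemma~\ref{lem: monotone envelope}. For the identity for $\Delta L_1$, I would start from the explicit forms \eqref{eq: L1 for constant L0}--\eqref{eq: L1 for variable L0} and apply the elementary factorization $a-b=(a^2-b^2)/(a+b)$ with $a=L_1^{\mathrm v}(d,\kappa)>0$ and $b=L_1^{\mathrm c}(d,\kappa)>0$. Squaring clears the radicals, the $d^2$ terms cancel, and $1+d\kappa$ factors out of the numerator, giving precisely \eqref{eq: difference in L1}; non-negativity is then immediate since $1+d\kappa>0$ on the feasible set and ${\Lv}^2(d)\ge{\Lmin}^2$ by $\Lv(d)\ge\Lmin$, the inequality underlying Lemma~\ref{lem: monotone envelope}.

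\emph{Bound on $\Delta\bar\eta$.} I would first restrict to the nondegenerate regime $L_1^{\mathrm v}(d,\kappa)<2\Rmin$ (otherwise $\bar\eta^{\mathrm v}=\pi/2$ and the right-hand side of \eqref{eq: heading error saturation derived} is not finite), so that, by \eqref{eq: saturation boundary}, $\bar\eta^{\mathrm v}=\arcsin\!\big(L_1^{\mathrm v}/(2\Rmin)\big)$ and $\bar\eta^{\mathrm c}=\arcsin\!\big(L_1^{\mathrm c}/(2\Rmin)\big)$ are genuine arcsine values. Writing $\Delta\bar\eta$ as a difference of arcsines and applying the mean value theorem on the interval with endpoints $L_1^{\mathrm c}/(2\Rmin)\le L_1^{\mathrm v}/(2\Rmin)$, with $\tfrac{d}{dx}\arcsin x=1/\sqrt{1-x^2}$, gives $\Delta\bar\eta=\Delta L_1\big/\big(2\Rmin\sqrt{1-\xi^{2}}\big)$ for some intermediate $\xi$. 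Since $x\mapsto 1/\sqrt{1-x^2}$ is increasing on $[0,1)$, substituting an endpoint for $\xi$ yields the claimed estimate; the point to watch is the direction---the genuine lower bound replaces $\xi$ by $L_1^{\mathrm c}/(2\Rmin)$ (hence also $\Delta\bar\eta\ge\Delta L_1/(2\Rmin)\ge0$, strict when $\Lv(d)>\Lmin$), whereas $L_1^{\mathrm v}/(2\Rmin)$ is the endpoint of the complementary tangent-line bound; equivalently one argues through the convexity of $\arcsin$ on $[0,1)$.

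\emph{Far field and time difference.} For \eqref{eq: heading error farfield} I would use that the profile \eqref{eq: variable L0 guidance profile} satisfies $\Lv(d)=\Lmax-(\Lmax-\Lmin)e^{-|d|/d_c}\to\Lmax$ as $|d|/d_c\to\infty$, and that in the regime where $d^2$ and $d\kappa$ are negligible beside $L_0^2$ one has $L_1^{\mathrm v}\to\Lmax$ and $L_1^{\mathrm c}\to\Lmin$ (since $\Lc\equiv\Lmin$), so $\bar\eta^{\mathrm v}\to\arcsin(\Lmax/(2\Rmin))$ and $\bar\eta^{\mathrm c}\to\arcsin(\Lmin/(2\Rmin))$, whose difference is \eqref{eq: heading error farfield}. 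For \eqref{eq: time difference farfield} I would note that throughout a saturated phase the turn rate is pinned at $|\dot\psi|=|a_{\mathrm d}|/V_u=V_u/\Rmin$ by \eqref{eq:sat}, so a heading window of width $\Delta\bar\eta$ is swept in time $\Rmin\Delta\bar\eta/V_u$; since the enlarged boundary $\bar\eta^{\mathrm v}$ shortens rather than extends the saturated window, the signed time difference carries the minus sign in \eqref{eq: time difference farfield}.

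\emph{Main obstacle.} The algebra in the first and third parts is routine; the delicate part is the second. Getting the inequality in \eqref{eq: heading error saturation derived} to point the right way requires care about which endpoint of $[L_1^{\mathrm c}/(2\Rmin),\,L_1^{\mathrm v}/(2\Rmin)]$ is fed into the increasing factor $1/\sqrt{1-(\cdot)^2}$, and about restricting to $L_1^{\mathrm v}<2\Rmin$ so the bound stays finite. A secondary difficulty is stating cleanly the ordering of scales---$d_c\ll|d|$ with $|d|$ small against $\Lmin$ and $|d\kappa|\ll1$---under which the far-field approximations \eqref{eq: heading error farfield}--\eqref{eq: time difference farfield} hold.
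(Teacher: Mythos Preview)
Your approach mirrors the paper's proof: the same difference-of-squares factorization for \eqref{eq: difference in L1}, the mean-value argument on $\arcsin\!\big(L_1/(2\Rmin)\big)$ for \eqref{eq: heading error saturation derived}, the far-field substitution $\Lv\to\Lmax$ for \eqref{eq: heading error farfield}, and the saturated-rate bound $|\dot\eta|\ge V_u/\Rmin$ for \eqref{eq: time difference farfield}. The paper is terser---it simply writes ``evaluating at $L_1=L_1^{\mathrm v}$ and applying the mean-value inequality yields the bound''---so your explicit caution about which endpoint governs the direction of \eqref{eq: heading error saturation derived} is, if anything, more careful than the original on precisely the step you flagged as delicate.
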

\begin{proof}
  From the fact that
$(L_1^{\mathrm v}(d,\kappa))^2-{L_1^{\mathrm c}(d,\kappa)}^2 = \big({\Lv}^2(d,\kappa)-{\Lmin}^2\big)(1+d\kappa)$ with the factorization reduces to Eq.~\eqref{eq: difference in L1} as
$L_1^{\mathrm v}(d,\kappa)-L_1^{\mathrm c}(d,\kappa)=\frac{(L_1^{\mathrm v}(d,\kappa))^2-(L_1^{\mathrm c}(d,\kappa))^2}{L_1^{\mathrm v}(d,\kappa)+L_1^{\mathrm c}(d,\kappa)}$.
Next, using $\bar\eta=\arcsin(L_1/(2\Rmin))$ on the unsaturated branch and hence,
$\frac{d}{dL_1}\arcsin(L_1/(2\Rmin))=\frac{1}{2\Rmin\sqrt{1-(L_1/(2\Rmin))^2}}$. Consequently, evaluating at
$L_1(d,\kappa) \\ = L_1^{\mathrm v}(d,\kappa)$ and applying the mean-value inequality yields the bound \eqref{eq: heading error saturation derived}.
Finally, when $|d|\gg d_c$ we have $\Lv(d,\kappa)\approx\Lmax$, giving \eqref{eq: heading error farfield} and the time
reduction in saturation from $|\dot{\eta}|\ge V_u/\Rmin$.
\end{proof}
The theoretical analysis discussed above is simulated with parameters $(V_u, g, L_{\min}, L_{\max}, d_c)
= (50~\text{m/s}, 9.8~\text{m/s}^2, \\ 50~\text{m}, 150~\text{m}, 30~\text{m}) $
for \( R_{\min} = 100~\text{m} \), \(\eta \in (-\pi,\, \pi]~\text{rad}\), and \( d \in [0,\,200]~\text{m} \). The saturation boundary $\bar{\eta}$  follow Eq.~\eqref{eq: saturation boundary}, wherein, $L_1(d,\kappa)$ will be substituted by Eq.~\eqref{eq: L1 for constant L0} and \eqref{eq: L1 for variable L0} for the two guidance laws. The corresponding saturation regions satisfy Eq.~\eqref{eq: saturation and unsaturation region}. 
To quantify the improvement contributed by the variable $L_0(d)$ profile over the constant $L_0$ guidance law, we define two complementary metrics based on the unsaturated region area in the $(d, \eta)$ plane.
Let $\mathcal{A}_{\text{const}}$ denote the fraction of the $(d, \eta)$ plane where the constant $L_0$ guidance law remains unsaturated, and $\mathcal{A}_{\text{var}}$ denote the corresponding fraction for the variable $L_0(d)$ profile. These are computed as:
\begin{align}
\mathcal{A}_{\text{const}} &= \frac{1}{N} \sum_{i=1}^{N} \mathbb{1}\left( |\eta_i| < \bar{\eta}_{\text{const}}(d_i) \right), 
\label{eq: area for constant L0 guidance law}\\
    \mathcal{A}_{\text{var}} &= \frac{1}{N} \sum_{i=1}^{N} \mathbb{1}\left( |\eta_i| < \bar{\eta}_{\text{var}}(d_i) \right)
    \label{eq: area for variable L0 guidance law}
\end{align}
where $N$ is the total number of grid points, $\mathbb{1}(\cdot)$ is the indicator function, and $\bar{\eta}(d)$ represents the saturation boundary in heading error.

\textit{Absolute Gain (percentage points):} This metric measures the direct increase in the unsaturated region as a difference in percentage points described by
\begin{equation}
    G_{\text{abs}} = \left( \mathcal{A}_{\text{var}} - \mathcal{A}_{\text{const}} \right) \times 100\%.
\end{equation}
The absolute gain provides an intuitive measure of how many additional percentage points of the $(d, \eta)$ plane remain unsaturated when using the variable $L_0(d)$ strategy.

\textit{Relative Gain (percentage increase):} This metric quantifies the proportional improvement relative to the baseline derived as
\begin{equation}
    G_{\text{rel}} = \left( \frac{\mathcal{A}_{\text{var}}}{\mathcal{A}_{\text{const}}} - 1 \right) \times 100\%.
\end{equation}
The relative gain expresses how much larger the unsaturated region becomes as a percentage of the original constant $L_0$ baseline, providing a normalized measure of effectiveness. Fig.~\ref{fig: Comparative study for ratio equal to 3} illustrates the relative comparison of operational envelope. Fig.~\ref{fig: constant L0 saturation and unsaturation region} and \ref{fig: variable L0 saturation and unsaturation region} correspond to operational regions for constant $L_0$ and variable $L_0$ guidance laws, respectively. At the ratio $L_{\max}/L_{\min} = 3.0$ (corresponding to $L_{\max} = 150$~m for $L_{\min} = 50$~m), the variable lookahead strategy achieves $\mathcal{A}_{\text{abs}} = 17.32\%$  and $\mathcal{A}_{\text{rel}} = 72.58\%$ , demonstrating that the unsaturated region expands from $23.86\%$ to $41.17\%$ of the $(d,\eta)$ plane as gain region indicated in green in Fig.~\ref{fig: advantage of variable L0 over constant L0 for ratio 3}. Furthermore, a detailed analysis of the ratio $ \in [1,5]$ is shown in Fig.~\ref{fig: Percentage increase in unsaturation region for different ratios}. The absolute gain (green, left axis) shows the direct benefit in percentage points, while the relative gain (magenta, right axis) demonstrates the proportional enhancement. Both metrics increase with the ratio, with marginal benefits observed at higher values, suggesting an optimal design range for practical implementation.

\begin{corollary}
\label{cor:wedge}
With \(\bar\eta^{\mathrm{c}}(d),\bar\eta^{\mathrm{v}}(d)\) computed from \(L_1^{\mathrm{c}}(d,\kappa), \\ L_1^{\mathrm{v}}(d,\kappa)\), from Eq.\eqref{eq: L1 for constant L0} and \eqref{eq: L1 for variable L0},
\begin{align}
\bar\eta^{\mathrm{v}}(d)\ \ge\ \bar\eta^{\mathrm{c}}(d)\quad\Rightarrow\quad
S_1^{\mathrm{v}}\ \supseteq\ S_1^{\mathrm{c}}.    
\end{align}
Figure~\ref{eq: polar view of eta d space} illustrates the polar-like representation of comparative study, where each point in the circular plot corresponds to a state $(d,\eta)$ represented in Cartesian coordinates as $(d\cos\eta, d\sin\eta)$. The concentric dashed circles indicate constant cross-track error levels ($d = 50$~m and $d = 100$~m), while the outer solid circle marks the boundary at $d = 150$~m.

Figure~\ref{fig: polar view constant L0} illustrates that the unsaturated region $S_1^c$ (light green) forms a relatively narrow band around the horizontal axis ($\eta \approx 0$). The saturation boundary $\bar{\eta}(d)$ (solid black curves) defines the transition between the unsaturated region and the saturated regions $S_2^c$ (light red, $\eta > \bar{\eta}$) and $S_3^c$ (light blue, $\eta < -\bar{\eta}$).
In Figure~\ref{fig: polar view variable L0}, the unsaturated region $S_1^v$ (light green) expands significantly compared to Figure~\ref{fig: polar view constant L0}, as discussed previously and analyzed in detail in Figure~\ref{fig: Comparative study for ratio equal to 3}. 
The circular representation clearly illustrates the geometric advantage of the variable $L_0$ strategy: the unsaturated region $S_1^v$ strictly contains $S_1^c$ (i.e., $S_1^v \supseteq S_1^c$), confirming Corollary~2. The added region
$S_1^v \setminus S_1^c$ represents states for which the variable $L_0$
 formulation remains unsaturated even though the constant $L_0$ baseline would saturate.

% This expanded operational envelope has direct practical benefits: it reduces the duration of saturation during path convergence, lowers control effort, and yields smoother and more reliable tracking—features particularly important for real UAV implementations where turn-rate limits and actuator constraints must be respected.
\end{corollary}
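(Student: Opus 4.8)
The plan is to split the statement into its two natural halves: first verify the stated implication (the easy direction), and then confirm that its hypothesis is in fact always met, so that the inclusion $S_1^{\mathrm v}\supseteq S_1^{\mathrm c}$ holds unconditionally. For the implication itself I would start from the definition $S_1=\{(d,\eta):|\eta|\le\bar\eta(d)\}$ of Eq.~\eqref{eq: saturation and unsaturation region}. Fix an arbitrary $(d,\eta)\in S_1^{\mathrm c}$, so that $|\eta|\le\bar\eta^{\mathrm c}(d)$. Under the hypothesis $\bar\eta^{\mathrm v}(d)\ge\bar\eta^{\mathrm c}(d)$, chaining the two inequalities gives $|\eta|\le\bar\eta^{\mathrm v}(d)$, i.e.\ $(d,\eta)\in S_1^{\mathrm v}$. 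Since $(d,\eta)$ was arbitrary, $S_1^{\mathrm c}\subseteq S_1^{\mathrm v}$; that is the entire content of the implication.

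It then remains to check the hypothesis. By Lemma~\ref{lem: monotone envelope}, for every feasible $(d,\kappa)$ with $1+d\kappa>0$ one has $L_1^{\mathrm v}(d,\kappa)\ge L_1^{\mathrm c}(d,\kappa)\ge 0$. Writing the saturation boundary of Eq.~\eqref{eq: saturation boundary} as $\bar\eta(d)=\phi\!\left(L_1(d,\kappa)\right)$ with $\phi(x):=\arcsin\!\big(\min\{1,\,x/(2\Rmin)\}\big)$, I would observe that $\phi$ is nondecreasing on $[0,\infty)$, being the composition of the increasing scaling $x\mapsto x/(2\Rmin)$, the nondecreasing clamp $x\mapsto\min\{1,x\}$, and $\arcsin$ restricted to $[0,1]$. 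Applying $\phi$ to the envelope inequality yields $\bar\eta^{\mathrm v}(d)=\phi\!\left(L_1^{\mathrm v}(d,\kappa)\right)\ge\phi\!\left(L_1^{\mathrm c}(d,\kappa)\right)=\bar\eta^{\mathrm c}(d)$, which is precisely the hypothesis; combined with the previous paragraph this establishes $S_1^{\mathrm v}\supseteq S_1^{\mathrm c}$.

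Being a corollary of Lemma~\ref{lem: monotone envelope}, this statement has no genuine obstacle; the one point that deserves a moment's care is that the clamp at $1$ in Eq.~\eqref{eq: saturation boundary} does not break monotonicity. Once $L_1^{\mathrm c}(d,\kappa)/(2\Rmin)\ge 1$ both boundaries coincide at $\pi/2$ and the two wedges are equal, whereas below that threshold the strict monotonicity of $\arcsin$ transfers the nonnegative gap $\Delta L_1(d,\kappa)\ge 0$ of Lemma~\ref{lem:quant-gaps} into the nonnegative gap $\Delta\bar\eta(d)\ge 0$ already recorded in Eq.~\eqref{eq: heading error saturation derived}. I would close by noting that the inclusion is \emph{strict} exactly on those $d$ where $\Delta\bar\eta(d)>0$, equivalently (via Lemma~\ref{lem:quant-gaps}) wherever $\Lv(d)>\Lmin$ and $L_1^{\mathrm c}(d,\kappa)<2\Rmin$ — in particular for all sufficiently small $|d|>0$ — which identifies the added set $S_1^{\mathrm v}\setminus S_1^{\mathrm c}$ and matches the green gain region in Figs.~\ref{fig: polar view constant L0}--\ref{fig: polar view variable L0}.
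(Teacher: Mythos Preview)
Your proof is correct and follows exactly the route the paper intends: the corollary is stated without a separate proof because it is an immediate consequence of Lemma~\ref{lem: monotone envelope} (giving $L_1^{\mathrm v}\ge L_1^{\mathrm c}$) together with the monotonicity of $\arcsin(\min\{1,\cdot/(2\Rmin)\})$, which is already exploited in Lemma~\ref{lem:quant-gaps} to obtain $\Delta\bar\eta\ge 0$. Your explicit handling of the clamp at $1$ and the characterization of where the inclusion is strict are welcome additions that the paper leaves to the reader and the figures.
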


\begin{lemma}[Saturated exit time bound]
\label{lem:Tfar}
From any initial \((d(0),\eta(0))\),
\begin{align}
 T_{\mathrm{far}}(L_0)\ \le\ \frac{\Rmin}{V_u}\,\bigl(|\eta(0)|-\bar\eta(d(0))\bigr)^+ .   
 \label{eq: T far}
\end{align}

Hence \(T_{\mathrm{far}}^{\mathrm{v_u}}\le T_{\mathrm{far}}^{\mathrm{c}}\). Further,
\[
\int_0^{T_{\mathrm{far}}} a^2\,dt \;\le\; \bigl(V_u^2/\Rmin\bigr)^2\,T_{\mathrm{far}},
\]
so the saturated-phase control effort is no greater for the variable profile.
\end{lemma}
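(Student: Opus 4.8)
The plan is to use that, on the saturated branch $S_2\cup S_3$ of \eqref{eq: saturation and unsaturation region}, the command in \eqref{eq: accleration for saturation and unsaturation region} is pinned at $|a_\mathrm{d}|=V_u^2/\Rmin$, so by \eqref{eq:kinematics} the heading turns at the maximal rate $|\dot\psi|=V_u/\Rmin$ and with sign $\sign(\eta)$, i.e.\ toward the line of sight. Writing $\eta$ as the difference between the LOS bearing $\chi$ (cf.\ \eqref{eq: eta computed}) and the heading $\psi$, differentiation gives $\dot\eta=\dot\chi-\dot\psi$, and on $S_2\cup S_3$ one has $-\sign(\eta)\,\dot\psi=-V_u/\Rmin$. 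The first step is therefore to bound the LOS turning rate $|\dot\chi|$: since the LOS vector is $L_1=T'-P$ with $\dot P$ from \eqref{eq:kinematics} and a virtual target that moves slowly along the path, $|\dot\chi|$ is $O(V_u/L_1)$ and, in the regime of interest, is dominated by $V_u/\Rmin$; this yields the differential inequality
\[
\tfrac{d}{dt}\,|\eta(t)|\ \le\ -\,V_u/\Rmin \qquad\text{while } (d(t),\eta(t))\in S_2\cup S_3 .
\]

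Second, I would integrate this inequality from $t=0$: as long as the state stays saturated, $|\eta(t)|\le|\eta(0)|-(V_u/\Rmin)\,t$, so $|\eta(t)|$ decreases to the level $\bar\eta$ of \eqref{eq: saturation boundary} — hence the state re-enters $S_1$ — no later than $t=(\Rmin/V_u)\,(|\eta(0)|-\bar\eta(d(0)))$. If instead $(d(0),\eta(0))\in S_1$ already, then $T_{\mathrm{far}}=0$; combining the two cases gives the positive-part bound \eqref{eq: T far}. The delicate point here is that the exit threshold is $\bar\eta(d(t))$, not the fixed value $\bar\eta(d(0))$; one closes this by noting $|\dot d|\le V_u$, so over the short saturated transient $d(t)$ — and therefore $\bar\eta(d(t))$ through \eqref{eq: saturation boundary} — stays within a controlled neighbourhood of $d(0)$, and a short-horizon estimate absorbs the difference. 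This is the step I expect to be the main obstacle, since it is where the LOS geometry (its rotation rate and the drift of $d$) must actually be estimated rather than merely invoked.

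Third, for $T_{\mathrm{far}}^{\mathrm v}\le T_{\mathrm{far}}^{\mathrm c}$ I would observe that while both laws are saturated they issue the \emph{same} command $a_\mathrm{d}=\pm V_u^2/\Rmin$, hence generate the same trajectory $(d(t),\eta(t))$; along it $|\eta(t)|$ is monotonically decreasing, and by Lemma~\ref{lem: monotone envelope} and Corollary~\ref{cor:wedge} the variable threshold satisfies $\bar\eta^{\mathrm v}(d)\ge\bar\eta^{\mathrm c}(d)$ for every $d$. A monotone decreasing function meets the higher level first, so the variable law re-enters $S_1^{\mathrm v}$ no later than the constant law re-enters $S_1^{\mathrm c}$, giving $T_{\mathrm{far}}^{\mathrm v}\le T_{\mathrm{far}}^{\mathrm c}$. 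A cruder route simply compares the right-hand sides of \eqref{eq: T far}, using $(|\eta(0)|-\bar\eta^{\mathrm v})^+\le(|\eta(0)|-\bar\eta^{\mathrm c})^+$.

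Finally, for the energy bound, on the saturated interval $a_\mathrm{d}^2\equiv(V_u^2/\Rmin)^2$ is constant, so $\int_0^{T_{\mathrm{far}}}a_\mathrm{d}^2\,dt=(V_u^2/\Rmin)^2\,T_{\mathrm{far}}$, which is the claimed inequality (in fact an equality); together with $T_{\mathrm{far}}^{\mathrm v}\le T_{\mathrm{far}}^{\mathrm c}$ this shows the saturated-phase control effort of the variable profile is no greater than that of the constant one.
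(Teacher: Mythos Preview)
Your approach matches the paper's, whose proof in its entirety reads: ``In saturation $|a|=V^2/\Rmin$ and $|\dot\eta|\ge V/\Rmin$. The angle distance to the boundary is reduced when $\bar\eta(d)$ is larger, which holds by Corollary~\ref{cor:wedge}.'' Your outline is already more careful than this --- the paper does not address the LOS rotation rate $\dot\chi$ or the drift of the threshold $\bar\eta(d(t))$ that you flagged as the main obstacle --- and your trajectory-identity argument for $T_{\mathrm{far}}^{\mathrm v}\le T_{\mathrm{far}}^{\mathrm c}$ (both laws issue the identical saturated command, hence coincide until the higher threshold is crossed) is sharper than the paper's comparison of the right-hand sides of \eqref{eq: T far}.
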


\begin{proof}
In saturation \(|a|=V^2/\Rmin\) and \(|\dot\eta|\ge V/\Rmin\). The angle distance to the boundary is reduced when \(\bar\eta(d)\) is larger, which holds by Corollary~\ref{cor:wedge}.
\end{proof}

\begin{figure}
\begin{subfigure}[b]{.25\textwidth}
	\centering			
   \includegraphics[width=\linewidth]{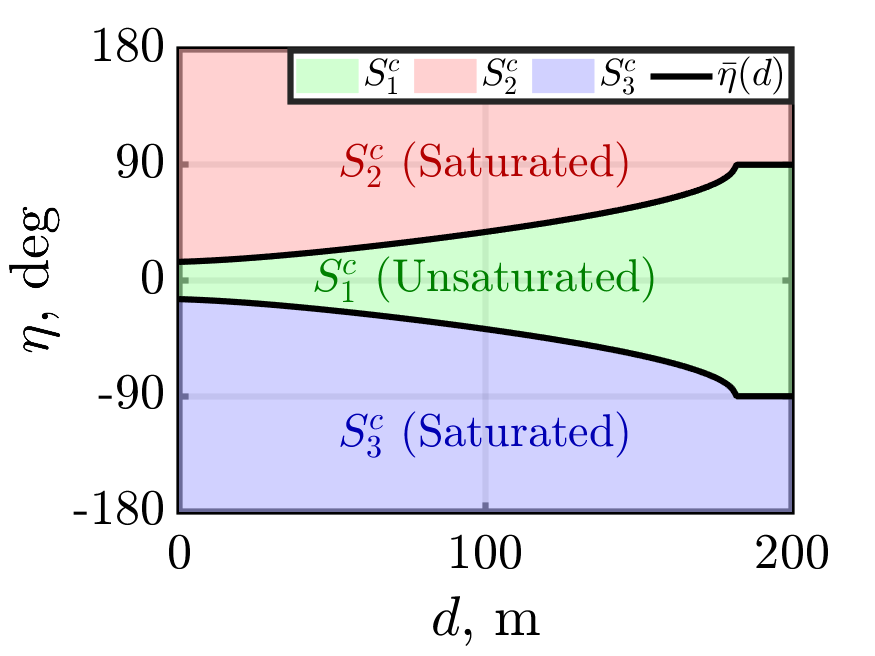}
	\caption{Constant $L_0$ guidance }         \label{fig: constant L0 saturation and unsaturation region}
\end{subfigure}%
\begin{subfigure}[b]{.25\textwidth}
	\centering
	\includegraphics[width=\linewidth]{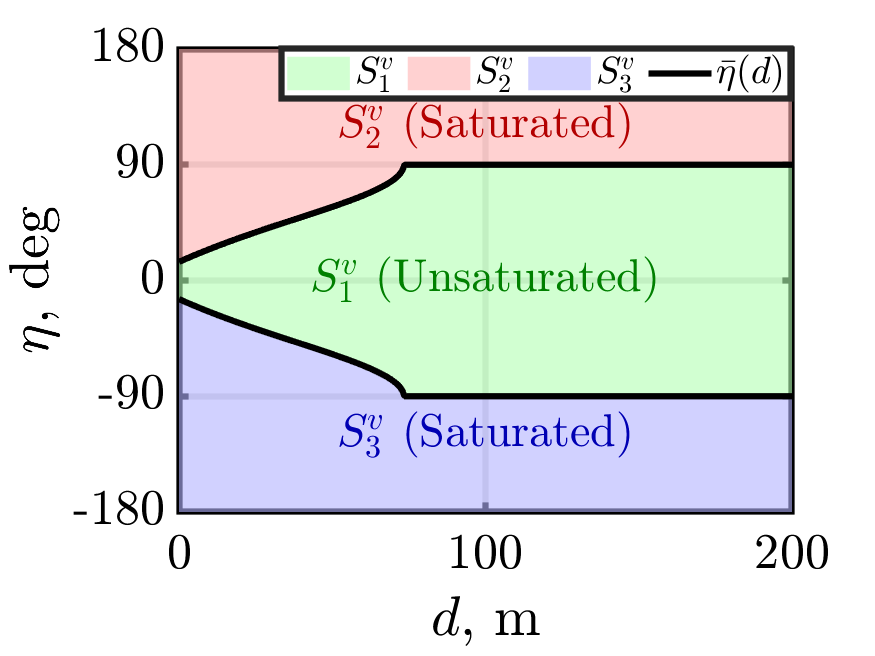}
	\caption{ Variable $L_0$ guidance }			\label{fig: variable L0 saturation and unsaturation region}       
\end{subfigure}%
\qquad 
\begin{subfigure}[b]{.25\textwidth}
	\centering	\includegraphics[width=\linewidth]{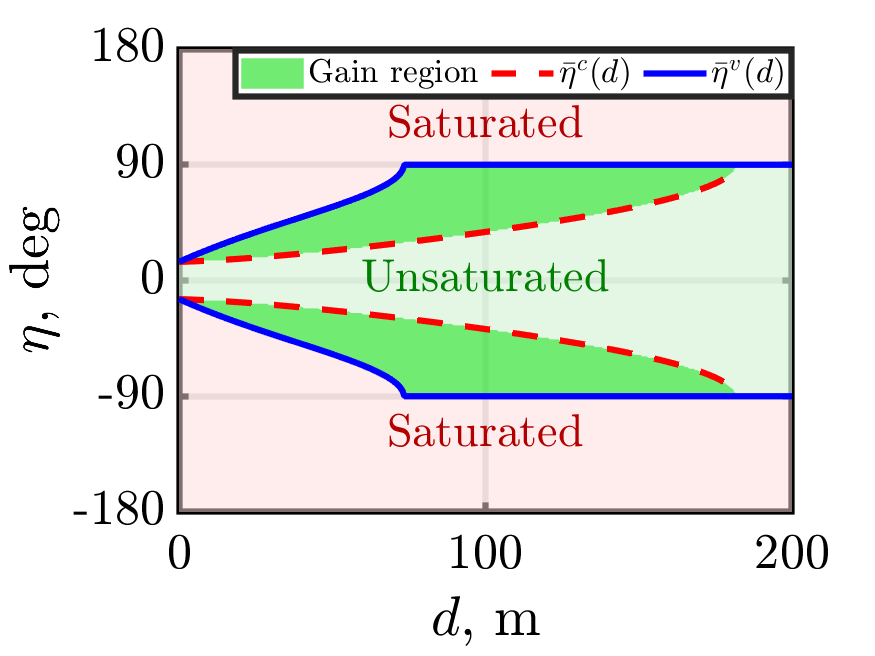}
	\caption{ Advantage for $\frac{L_{\mathrm{max}}}{L_{\mathrm{min}}} = 3 $.}			\label{fig: advantage of variable L0 over constant L0 for ratio 3}       
\end{subfigure}%
\begin{subfigure}[b]{.25\textwidth}
	\centering    \includegraphics[width=\linewidth]{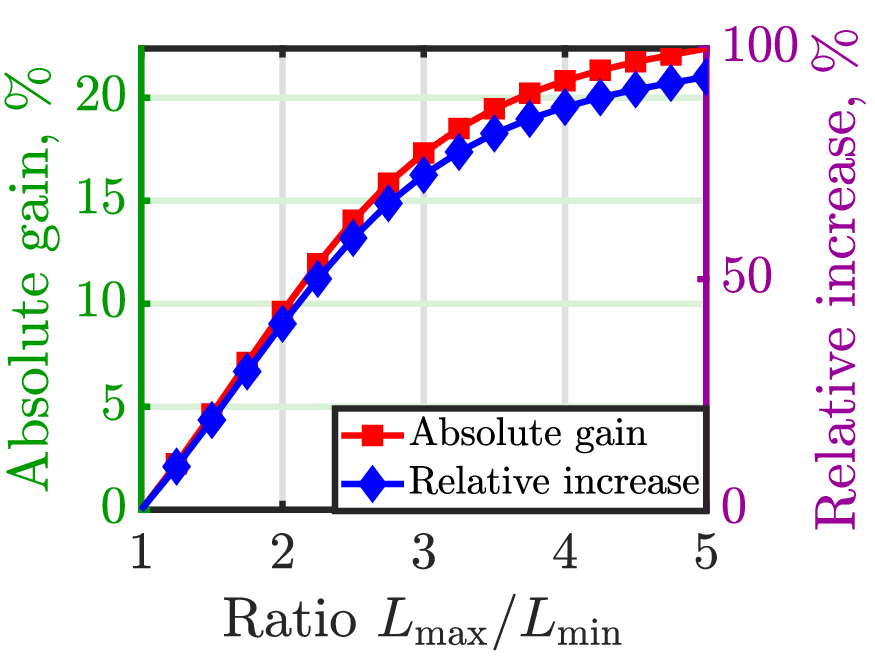}
    \caption{Advantage  for different ratios}
    \label{fig: Percentage increase in unsaturation region for different ratios}
\end{subfigure}%
\caption{Enlarged envelope analysis of two guidance laws}
\label{fig: Comparative study for ratio equal to 3}
\end{figure}

\begin{figure}
\begin{subfigure}[b]{.25\textwidth}
	\centering			
   \includegraphics[width=1.15\linewidth]{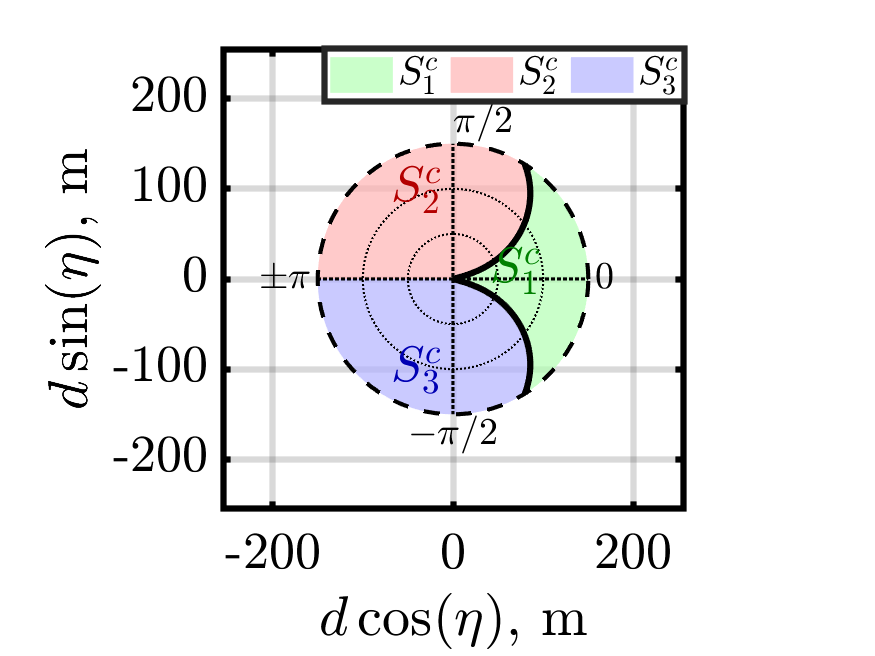}
	\caption{Constant $L_0$ guidance }         \label{fig: polar view constant L0}
\end{subfigure}%
\begin{subfigure}[b]{.25\textwidth}
	\centering
	\includegraphics[width=1.15\linewidth]{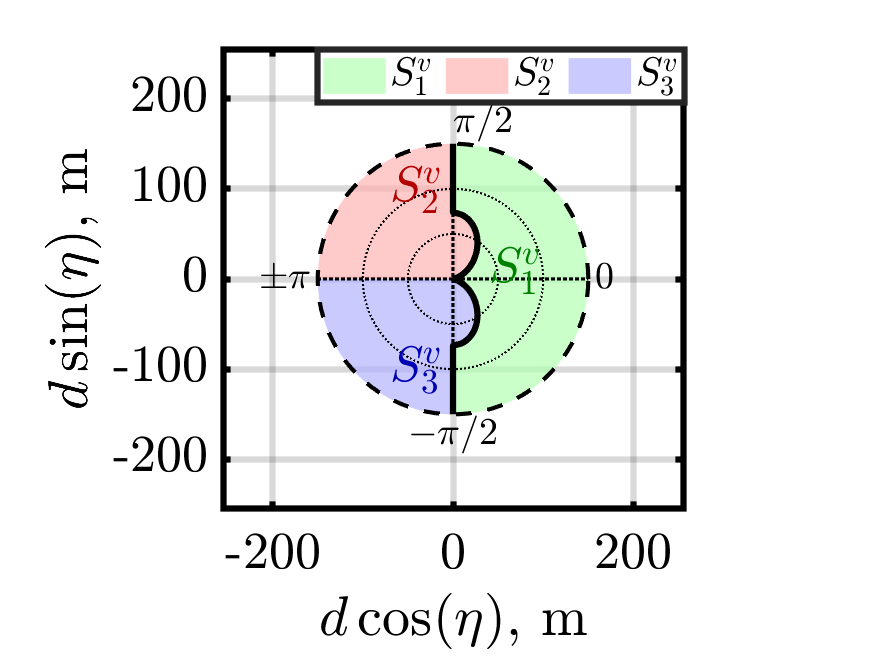}
	\caption{ Variable $L_0$ guidance }			\label{fig: polar view variable L0}       
\end{subfigure}%
\caption{Polar like representation of $\eta-d$ space for $\frac{L_{\mathrm{max}}}{L_{\mathrm{min}}} = 3 $. }
\label{eq: polar view of eta d space}
\end{figure}

\begin{lemma}[Near–path linear rate]
\label{lem:tau}
In $S_1$ both laws share the same local time constant
\begin{equation}
\label{eq:tau}
\tau\ =\ \frac{L_{\min}}{V},
\end{equation}
since $L_0^{\rm v}(0)=L_{\min}$ and $L_1\to L_{\min}$ near the path.
\end{lemma}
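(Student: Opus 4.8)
The plan is to reduce the closed loop to its near--path linearization inside $S_1$ and to show that this linearization is \emph{literally the same} for the constant and the variable law, so that the dominant eigenvalue --- hence the local time constant --- is common to both. The relevant equilibrium is $(d,\psi,\eta)=(0,0,0)$: vehicle on the path, heading aligned with the tangent, zero heading error. Because $\bar\eta(0)=\arcsin\!\big(\min\{1,\Lmin/(2\Rmin)\}\big)>0$ for both laws (they agree at $d=0$), an entire neighbourhood of this point lies in $S_1$, so the unsaturated command $a_{\mathrm d}=2V_u^2\sin\eta/L_1(d,\kappa)$ of \eqref{GuidanceLogicEq} is the one acting there. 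The single structural fact driving everything is $\Lv(0)=\Lmin$: setting $d=0$ in \eqref{eq: variable L0 guidance profile} kills the exponential term, so from \eqref{eq: L1 for constant L0}--\eqref{eq: L1 for variable L0} one gets $L_1^{\mathrm c}(0,\kappa)=L_1^{\mathrm v}(0,\kappa)=\sqrt{\Lmin^2(1+0)}=\Lmin$, independently of $\kappa$.

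First I would linearize \eqref{eq:kinematics} together with that command about the equilibrium. Writing $\eta=\eta(d,\psi)$ for the geometric heading-error map, one has $\eta(0,0)=0$ and, to leading order near the origin, $\partial_\psi\eta=-1$ and $\partial_d\eta=-1/L_0(0)=-1/\Lmin$ (from the path-tangent LOS angle $\simeq-d/s$ with $s=L_0(d)\sqrt{1+d\kappa}$, in which only $L_0(0)=\Lmin$ enters at first order). By the chain rule the $d$-sensitivity of $a_{\mathrm d}$ at the origin is $\big(2V_u^2/L_1(0,\kappa)\big)\partial_d\eta$ minus $\big(2V_u^2\sin\eta/L_1^2\big)\partial_d L_1$, and the second term vanishes because $\sin\eta=0$ at the equilibrium. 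This is the crucial cancellation: the only place where the variable profile's slope $(\Lv)'(0)=(\Lmax-\Lmin)/d_c$ could enter --- namely $\partial_d L_1$ --- is multiplied by a factor that is zero at the equilibrium, so it is invisible to the linearization. What survives is governed by $L_1(0,\kappa)=\Lmin$ alone and is therefore identical for both laws: $\dot d=V_u\psi$, $\dot\psi=-(2V_u/\Lmin^2)\,d-(2V_u/\Lmin)\,\psi$, with characteristic polynomial $\lambda^2+(2V_u/\Lmin)\lambda+2V_u^2/\Lmin^2$ and eigenvalues $\lambda=-V_u/\Lmin\pm\mathrm i\,V_u/\Lmin$. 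Its decay envelope is $e^{-V_u t/\Lmin}$, i.e.\ $\tau=\Lmin/V_u$ (the $V$ of \eqref{eq:tau}). If one prefers the cruder first-order cross-track reduction behind \eqref{eq:near-ad}, $\dot d\approx-V_u\eta\approx-V_u d/L_1$, the same limit $\tau=L_1/V_u\to\Lmin/V_u$ follows at once.

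The only step needing real care --- and the main obstacle --- is confirming that the explicit $d$-dependence of $\Lv$ does not leak into the linearized coefficients. I would settle this by observing that $\Lv$ enters $a_{\mathrm d}$ in exactly two ways: through $L_1(d,\kappa)$, which multiplies the $O(\eta)$ factor $\sin\eta$, and through the map $\eta(d,\psi)$ via $s=\Lv(d)\sqrt{1+d\kappa}$ in the LOS angle. In the first occurrence the correction $\Lv(d)-\Lmin=O(d)$ multiplies $\sin\eta=O(d,\psi)$, giving $O(d^2)$; in the second, $\eta\simeq-d/s$, so the same $O(d)$ correction again appears only at $O(d^2)$. Hence both laws produce the same Jacobian at the origin, and each has dominant time constant $\tau=\Lmin/V_u$; the identity $\Lv(0)=\Lmin$ is precisely what renders the variable profile indistinguishable from the constant-$L_0$ baseline to first order near the path.
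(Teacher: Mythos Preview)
Your argument is correct and follows the same essential idea the paper invokes---namely $L_0^{\mathrm v}(0)=L_{\min}$ and $L_1\to L_{\min}$ near the path---but the paper offers no proof beyond that one-line justification embedded in the lemma statement itself. What you add is a genuine linearization: you compute the Jacobian at $(d,\psi)=(0,0)$, verify explicitly that the slope $(L_0^{\mathrm v})'(0)$ cannot enter (because it appears only through $\partial_d L_1$ multiplied by $\sin\eta=0$, or as an $O(d^2)$ correction in the LOS angle $d/s$), and recover the standard $L_1$-guidance eigenvalues $\lambda=-V_u/L_{\min}\pm\mathrm{i}\,V_u/L_{\min}$, hence $\tau=L_{\min}/V_u$. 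This is strictly more than the paper provides and is the right way to substantiate the claim; the paper effectively takes the near-path reduction $\dot d\approx -V_u\,d/L_1$ and the limit $L_1\to L_{\min}$ as understood, whereas you close the gap by showing that the variable profile's extra $d$-dependence is second order and therefore invisible to the linearized time constant.
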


%\begin{sidewaystable}
\begin{table*}
\centering
\caption{Combined effect of $L_\mathrm{max}$ and $d_c$ on variable $L_0$ guidance law performance}
\label{tab:combined_effect}
\begin{tabular}{ccccc}
	\toprule
	\textbf{Parameter} & \textbf{Effect on $L_0$} & \textbf{Effect on $a_d$} & \textbf{Characteristics} & \textbf{Remarks} \\
	\midrule
	Higher $ L_\mathrm{max}$ & Larger $L_0$ at large $d$ &  Lower acceleration & Slower, smoother &  Stable flight \\
	Lower $ L_\mathrm{max}$ & Smaller $L_0$ at large $d$ &  Higher acceleration & Faster, aggressive & Performance priority \\
	Lower $ d_c$ & Faster saturation &  Lower at medium $d$ & Slower from medium $d$ & Minimize oscillations \\
	Higher $ d_c$ & Slower saturation &  Higher at medium $d$ & Faster from medium $d$ & Fast initial convergence \\
	\bottomrule
\end{tabular}
\end{table*}

\subsection{Stability properties}

\begin{lemma}
\label{lem:g_def}
Let
\begin{align}
\begin{split}
L_1^2(d,\kappa)& =d^2+L_0(d)^2\,(1+d\kappa)     
\end{split}  
\end{align}
with $L_0:\mathbb R\to\mathbb R_+$ of class $C^1$. Then
\begin{align}
 \frac{\partial}{\partial d}\bigl(L_1^2(d,\kappa)\bigr)
=2\Bigl[d+(1+d\kappa)L_0(d)L_0'(d)+\tfrac12\,\kappa\,L_0(d)^2\Bigr]   
\end{align}
In particular, defining
\begin{align}
g(d)= d+(1+d\kappa)L_0(d)L_0'(d)+\tfrac12\,\kappa\,L_0(d)^2.    
\end{align}
we have $\frac{\partial}{\partial d}(L_1^2)=2g(d)$
\end{lemma}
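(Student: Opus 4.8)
The plan is to obtain the identity by differentiating the explicit expression $L_1^2(d,\kappa)=d^2+L_0(d)^2(1+d\kappa)$ directly with respect to $d$, treating $\kappa$ as a fixed parameter. Because $L_0$ is assumed of class $C^1$, the composite map $d\mapsto L_0(d)^2$ is differentiable with derivative $2L_0(d)L_0'(d)$ by the chain rule, so each term produced below is well defined and the computation is legitimate on all of $\mathbb{R}$.

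First I would split $L_1^2(d,\kappa)$ into the two summands $d^2$ and $L_0(d)^2(1+d\kappa)$. The first gives $\frac{\partial}{\partial d}(d^2)=2d$ at once. For the second, I would apply the product rule to the factors $L_0(d)^2$ and $(1+d\kappa)$, using the chain-rule derivative noted above for $L_0(d)^2$ and $\frac{\partial}{\partial d}(1+d\kappa)=\kappa$, obtaining
\begin{align}
\frac{\partial}{\partial d}\bigl[L_0(d)^2(1+d\kappa)\bigr]
&=2L_0(d)L_0'(d)\,(1+d\kappa)+\kappa\,L_0(d)^2 .
\end{align}
Adding the two contributions yields
\begin{align}
\frac{\partial}{\partial d}\bigl(L_1^2(d,\kappa)\bigr)
&=2d+2(1+d\kappa)L_0(d)L_0'(d)+\kappa\,L_0(d)^2 ,
\end{align}
and factoring out the common $2$ gives $2\bigl[d+(1+d\kappa)L_0(d)L_0'(d)+\tfrac12\kappa L_0(d)^2\bigr]$, which is exactly $2g(d)$ by the definition of $g$. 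This closes the argument.

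There is essentially no hard step: the statement is a bookkeeping identity that follows from the product and chain rules, and no sign or feasibility condition (such as $1+d\kappa>0$) is needed for the identity itself. The only point worth an explicit line is that the $C^1$ hypothesis on $L_0$ is what licenses differentiating $L_0(d)^2$; one should also remark that this hypothesis fails at $d=0$ for the variable profile $\Lv$ of Eq.~\eqref{eq: variable L0 guidance profile} (it has a corner through $|d|$), so for $\Lv$ the identity is applied on each half-line $d>0$ and $d<0$ separately. If I wanted to add value beyond the bare computation, I would note that $g$ is continuous wherever $L_0$ is $C^1$, hence $L_1^2$ is itself $C^1$ in $d$ there, and that $g(0)=L_0(0)L_0'(0)$, which ties back to the near-path rate statement in Lemma~\ref{lem:tau}; but for the lemma as stated the three-line derivation above is sufficient.
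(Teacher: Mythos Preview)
Your proof is correct and follows essentially the same route as the paper: direct differentiation of $L_1^2$ using the product and chain rules, then factoring out the $2$. One minor slip in your optional commentary: from the definition, $g(0)=L_0(0)L_0'(0)+\tfrac12\kappa\,L_0(0)^2$, not merely $L_0(0)L_0'(0)$, unless $\kappa=0$.
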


\begin{proof}
Differentiate $L_1^2(d,\kappa)=d^2+L_0(d)^2(1+d\kappa)$ using the product rule leads to
\begin{align}
\begin{split}
  \frac{\partial}{\partial d}\bigl(L_1^2\bigr)
&=2d+2L_0 L_0'(1+d\kappa) + L_0^2\kappa \\
&=2\Bigl[d+(1+d\kappa)L_0L_0'+\tfrac12 \kappa L_0^2\Bigr]    
\end{split}
\end{align}
\end{proof}

\begin{lemma}
\label{lem:g_sign}
Suppose $L_0(\cdot)$ is even and nondecreasing in $|d|$, and the feasibility condition $1+d\kappa>0$ holds (e.g., on the standard feasible set).
Then $g(d)$ has the same sign as $d$, and for
\begin{align}
\Phi(d):=\int_{0}^{d}\frac{g(\xi)}{L_1(\xi,\kappa)^2}\,d\xi    
\end{align}
we have $\Phi(d)\ge 0$ with equality iff $d=0$.
\end{lemma}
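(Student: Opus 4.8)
The plan is to pin down the sign of $g$ pointwise---treating the curvature-free case $\kappa=0$ (the primary setting, cf.\ \eqref{eq: straight line L0 guidance}) as the clean prototype---and then deduce $\Phi\ge 0$ by a one-dimensional integration argument. First I would record the consequences of the hypotheses on $L_0$: being even and $C^1$, $L_0'$ is odd with $L_0'(0)=0$, and being nondecreasing in $|d|$, $L_0'(d)\ge 0$ for $d>0$ and $L_0'(d)\le 0$ for $d<0$; hence $d\,L_0'(d)\ge 0$ for all $d$, so $L_0(d)L_0'(d)$ carries the sign of $d$. Then, in $g(d)=d+(1+d\kappa)L_0(d)L_0'(d)+\tfrac12\kappa L_0(d)^2$, feasibility gives $1+d\kappa>0$ and $L_0(d)>0$, so for $d>0$ the first summand is strictly positive and the second nonnegative. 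If $\kappa\ge 0$ the third is nonnegative too and $g(d)>0$ is immediate. If $\kappa<0$ I would absorb the negative term using the feasibility set of Definition~\ref{def:feasible-set}: $1+d\kappa>0$ forces $0<d<R=1/|\kappa|$, and $L_1(d,\kappa)\le 2R-d$ gives, after squaring and dividing by the positive factor $1-d/R$, the bound $L_0(d)\le 2R$, which together with the nonnegative growth term $(1+d\kappa)L_0L_0'$ keeps $g(d)>0$ on the feasible region. The case $d<0$ then follows from the reflection $(d,\kappa)\mapsto(-d,-\kappa)$, under which $L_1^2$ is invariant and $g$ is negated; thus $g$ has the sign of $d$.

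For $\Phi$, note $L_1(\xi,\kappa)^2>0$ throughout the feasible range, so by the previous step the integrand $g(\xi)/L_1(\xi,\kappa)^2$ has the sign of $\xi$. If $d>0$ it is positive on $(0,d]$, giving $\Phi(d)>0$; if $d<0$ then $\Phi(d)=-\int_d^0 g(\xi)/L_1(\xi,\kappa)^2\,d\xi$ is minus the integral of a negative function, hence positive; and $\Phi(0)=0$. If $\Phi(d)=0$ for some $d\neq0$, the continuous, single-signed integrand must vanish identically on the interval of integration, forcing $d=0$---a contradiction, so equality holds iff $d=0$. For $\kappa=0$ this last point is immediate from the explicit primitive $\Phi(d)=\tfrac12\ln\!\big[(d^2+L_0(d)^2)/L_0(0)^2\big]$, which is $\ge0$ since $d^2+L_0(d)^2\ge L_0(d)^2\ge L_0(0)^2$, with equality only at $d=0$.

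The step I expect to be the main obstacle is controlling the curvature term $\tfrac12\kappa L_0(d)^2$: it is the only ingredient of $g$ not signed by $d$, so for $\kappa d<0$ the argument must genuinely exploit the feasibility constraints (this is the one place where $L_1\le 2R-d$ enters) to keep it dominated by $d+(1+d\kappa)L_0L_0'$; in effect it requires the profile's relative growth $L_0'/L_0$ to be large enough relative to $\tfrac12|\kappa|/(1+d\kappa)$, which is comfortably satisfied for the parameters used here. The fully unconditional form of the statement---and the exact Lyapunov reading with local time constant $\tau=L_{\min}/V$ of Lemma~\ref{lem:tau}---is the straight-line case $\kappa=0$.
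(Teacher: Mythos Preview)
Your term-by-term sign analysis of $g(d)$ is sound, but the paper takes a shorter route: it invokes the identity $2g(d)=\partial_d L_1^2(d,\kappa)$ from Lemma~\ref{lem:g_def} directly, so the sign of $g$ is reduced to the single claim that $d\mapsto L_1^2(d,\kappa)$ is increasing for $d>0$ and decreasing for $d<0$. That same identity gives the closed-form primitive $\Phi(d)=\tfrac12\ln\!\bigl[L_1^2(d,\kappa)/L_0(0)^2\bigr]$ for \emph{all} $\kappa$, not only the straight-line case you wrote down, and positivity of $\Phi$ then amounts to $L_1^2(d,\kappa)\ge L_0(0)^2$. Your honest flagging of the curvature term $\tfrac12\kappa L_0(d)^2$ as the obstacle is well placed: the paper's monotonicity claim is asserted rather than proved and conceals exactly the same issue when $\kappa d<0$ (for constant $L_0$ and small $d>0$ with $\kappa<0$ one has $\partial_d L_1^2=2d+\kappa L_0^2<0$, so extra feasibility or growth conditions on $L_0$ are genuinely required). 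What the paper's approach buys is brevity and the explicit $\Phi$; what yours buys is transparency about where the hypotheses actually bite.
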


\begin{proof}
By Lemma~\ref{lem:g_def}, $2g(d)=\partial_d L_1^2(d,\kappa)$. With $L_0$ even and nondecreasing in $|d|$ and $1+d\kappa>0$, the map $d\mapsto L_1^2(d,\kappa)$ is strictly increasing for $d>0$ and strictly decreasing for $d<0$, hence $g(d)$ has the sign of $d$. Since $L_1^2>0$, the integrand $\frac{g(\xi)}{L_1(\xi,\kappa)^2}$ has the sign of $\xi$, so the integral from $0$ to $d$ is nonnegative and only vanishes at $d=0$.
\end{proof}

\begin{lemma}
\label{lem:key_identity}
With $g(d)$ as in Lemma~\ref{lem:g_def} and the geometry above, the following identity holds:
\begin{align}
 \frac{g(d)}{L_1(d,\kappa)}
=\sin\eta_1\,\cos\eta_2+\cos\eta_1\,\sin\eta_2-\sin\eta_2\,\cos(\eta_1+\eta_2).   
\end{align}
\end{lemma}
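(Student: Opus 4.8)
The identity is geometric: its point is that $g(d)$, which by Lemma~\ref{lem:g_def} equals $\tfrac12\,\partial_d\bigl(L_1^2(d,\kappa)\bigr)$, can be read off from the sightline angles of the Frenet picture of Fig.~\ref{fig:feasibility study}. I would write $\eta_1$ for the angle between the LOS vector $L_1=(s,-d)$ and the tangent $OX$ at the closest point $O$, so that $\sin\eta_1=d/L_1$ and $\cos\eta_1=s/L_1$ with $s=L_0(d)\sqrt{1+d\kappa}$, and $\eta_2$ for the arc angle supplied by the geometry above (a function of $\kappa$ and the along-path distance), and set $\eta:=\eta_1+\eta_2$. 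The plan is: (i) simplify the right-hand side by elementary trigonometry; (ii) evaluate $g(d)$ in closed form from $\tfrac12\,\partial_d(L_1^2)$; and (iii) reconcile the two using the relations linking $\eta_1,\eta_2$ to $d,\kappa,L_0,L_0'$.

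For step (i), the first two terms on the right combine by the angle-addition formula into $\sin(\eta_1+\eta_2)=\sin\eta$, so the right-hand side equals $\sin\eta-\sin\eta_2\cos\eta$; expanding $\cos\eta=\cos\eta_1\cos\eta_2-\sin\eta_1\sin\eta_2$ and regrouping the $\sin\eta_1$ and the $\cos\eta_1\sin\eta_2$ contributions collapses it to the compact form $\sin\eta_1+(1-\cos\eta_2)\,\sin\eta$, i.e.\ a near-path term $\sin\eta_1$ plus a curvature/variable-look-ahead correction. For step (ii), since $L_1^2=d^2+s^2$ with $s=L_0(d)\sqrt{1+d\kappa}$, differentiation gives $g(d)=\tfrac12\,\partial_d(L_1^2)=d+s\,s'(d)=d+(1+d\kappa)L_0L_0'+\tfrac12\,\kappa L_0^2$, which both recovers the $g$ of Lemma~\ref{lem:g_def} and displays it as $d$ plus $s$ times the along-tangent slide rate of the virtual target. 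Dividing by $L_1$ and using $d/L_1=\sin\eta_1$, $s/L_1=\cos\eta_1$, the assertion reduces to the single scalar relation $\cos\eta_1\,s'(d)=(1-\cos\eta_2)\sin\eta$.

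Step (iii) is then to confirm this last relation from the definitions of $\eta_1,\eta_2$: $1-\cos\eta_2$ is the curvature-induced sag of the arc relative to its chord, $\sin\eta$ carries the combined heading geometry, and $s'(d)=L_0'(d)\sqrt{1+d\kappa}+\tfrac{\kappa L_0(d)}{2\sqrt{1+d\kappa}}$ carries exactly the matching $L_0'$ and $\kappa L_0$ pieces, so the two sides agree once everything is rewritten in $d,\kappa,L_0(d),L_0'(d)$, with the feasibility condition $1+d\kappa>0$ keeping the radicals real and $\eta_1\in(-\pi/2,\pi/2)$ well defined. I expect this reconciliation --- rather than any single deep step --- to be the main obstacle: one must fix the sign conventions of $\eta_1$ and $\eta_2$ in the Frenet frame and, above all, retain the $L_0'(d)$ term arising from the virtual target sliding along the path as the cross-track error varies, so that the gradient term $(1+d\kappa)L_0L_0'$ and the curvature term $\tfrac12\kappa L_0^2$ both appear with the right coefficients.
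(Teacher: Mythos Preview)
Your overall strategy is the paper's strategy: expand the right-hand side with the angle-addition formulas and then substitute the geometric values $\sin\eta_1=d/L_1$, $\cos\eta_1=L_0\sqrt{1+d\kappa}/L_1$. Your step~(i) reduction to $\sin\eta_1+(1-\cos\eta_2)\sin\eta$ is correct, and step~(ii) correctly rewrites $g(d)/L_1=\sin\eta_1+\cos\eta_1\,s'(d)$.

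The genuine gap is step~(iii). You reduce the lemma to the scalar relation
\[
\cos\eta_1\,s'(d)\;=\;(1-\cos\eta_2)\,\sin\eta,
\]
announce that ``the two sides agree once everything is rewritten in $d,\kappa,L_0(d),L_0'(d)$,'' and stop. But you never say what $\sin\eta_2$ and $\cos\eta_2$ \emph{are}. Your description of $\eta_2$ as ``the arc angle \dots\ a function of $\kappa$ and the along-path distance'' is never turned into a formula, so the reconciliation you flag as the main obstacle is simply not carried out. Worse, your own reduction exposes a consistency issue you do not address: the left side $\cos\eta_1\,s'(d)$ depends only on $(d,\kappa,L_0,L_0')$, whereas the right side $(1-\cos\eta_2)\sin\eta$ depends on $\eta_2$; if $\eta_2$ were the free heading state appearing in the error dynamics, the identity could not hold pointwise. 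So the missing ingredient is precisely a concrete geometric identification of $\eta_2$ (hence of $\sin\eta_2,\cos\eta_2$) that makes the right-hand side a function of $(d,\kappa,L_0,L_0')$ alone, after which the algebra can be closed.

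The paper's proof does not take your intermediate detour through $\sin\eta_1+(1-\cos\eta_2)\sin\eta$; it expands $\cos(\eta_1+\eta_2)$ directly, substitutes $\sin\eta_1$ and $\cos\eta_1$, and asserts that collecting terms produces $\tfrac{1}{L_1}\bigl[d+(1+d\kappa)L_0L_0'+\tfrac12\kappa L_0^2\bigr]$. That is terser, but it relies on exactly the same implicit identification of $\eta_2$ that you have not supplied. In short: right route, unfinished --- to complete the argument you must pin down $\eta_2$ explicitly and then actually perform the substitution you promise in step~(iii).
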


\begin{proof}
Use the identities $\sin(\eta_1+\eta_2)=\sin\eta_1\cos\eta_2+\cos\eta_1\sin\eta_2$ and $\cos(\eta_1+\eta_2)=\cos\eta_1\cos\eta_2-\sin\eta_1\sin\eta_2$. A direct substitution of $\sin\eta_1=d/L_1$ and $\cos\eta_1=L_0\sqrt{1+d\kappa}/L_1$ into the right-hand side followed by collecting terms yields
\begin{align}
\begin{split} \sin\eta_1\cos\eta_2+\cos\eta_1\sin\eta_2-\sin\eta_2\cos(\eta_1+\eta_2) \\
=\frac{1}{L_1}\Bigl[d+(1+d\kappa)L_0L_0'+\tfrac12 \kappa L_0^2\Bigr]   
\end{split}   
\end{align}
where the bracket equals $g(d)$ by Lemma~\ref{lem:g_def}. 
\end{proof}
\begin{theorem}
Consider a vehicle moving with a constant speed $V_u>0$ near a regular $C^2$ path and the reduced Frenet error dynamics described by
% \begin{equation}
% \dot d=-V_u\sin\eta_2,\qquad
% \dot\eta=\dfrac{a_\mathrm{d}}{V}+\dfrac{V_u}{L_1}\sin\eta .
% \label{eq:kin}
% \end{equation}
\begin{align}
   \begin{split}
    \dot d=-V_u\sin\eta_2,\qquad
\dot\eta_2=\frac{a_\mathrm{d}}{V_u}, \\
\dot\eta_1=-\frac{V_u}{L_1}\sin(\eta_2-\eta_1),\\
\dot\eta=\frac{a_\mathrm{d}}{V_u}+\frac{V_u}{L_1}\sin\eta.    
   \end{split}    
\label{eq: error dynamics}
\end{align}
For every initial $(d(0),\eta(0))\in\mathcal F$, the solution is forward complete using heading guidance Eq.~\eqref{eq: accleration for saturation and unsaturation region}, and hence
\begin{align}
  (d(t),\eta(t))\to (0,0)\qquad \text{as }t\to\infty .
  \label{eq: solution dynamics}
\end{align} 
%  the equilibrium $(d,\eta_2)=(0,0)$ is achieved and is asymptotically stable on the
% feasible set $\{(d,\kappa):\,1+d\kappa>0\}$. 

% with guidance command as
% \begin{equation}
% a_\mathrm{d} = 
% \begin{dcases}
% \dfrac{2V_u^{2}}{L^{1}} \sin \eta         & \text{if } (d,\eta) \in S_1, \\
% \dfrac{2V_u^{2}}{L^{1}} \sin \bar{\eta}   & \text{if } (d,\eta) \in S_2, \\
% -\dfrac{2V_u^{2}}{L^{1}} \sin \bar{\eta}  & \text{if } (d,\eta) \in S_3
% \end{dcases}
% \end{equation}

% Moreover:
% \begin{itemize}
% \item[\emph{(i)}] (\textit{Saturated} $\to$ \textit{Unsaturated}) From any $(d,\eta)\in S_2\cup S_3$,
% the trajectory enters $S_1$ in finite time with conservative bound given by Eq.~\eqref{eq: T far}
% \item[\emph{(ii)}] (\textit{Invariance of $S_1$}) $S_1$ is positively invariant.
% \item[\emph{(iii)}] (\textit{Convergence in $S_1$}) On $S_1$,
% $(d,\eta)\to(0,0)$ asymptotically.    
% \end{itemize}
\end{theorem}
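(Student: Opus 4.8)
\ The plan is to read \eqref{eq: error dynamics} as a switched system whose essential state is $(d,\eta)$ — the angles $\eta_1,\eta_2$ being tied to it through $\sin\eta_1=d/L_1$, $\cos\eta_1=L_0(d)\sqrt{1+d\kappa}/L_1$ and $\eta=\eta_2-\eta_1$ — and to prove the claim in three steps: (a) well-posedness and forward completeness; (b) finite-time escape from the saturated regions $S_2\cup S_3$; (c) a Lyapunov/LaSalle argument in the unsaturated region $S_1$. For (a) I would first note that $a_\mathrm{d}$ in \eqref{eq: accleration for saturation and unsaturation region} is continuous across the switching surfaces $\{|\eta|=\bar\eta(d)\}$, since at $|\eta|=\bar\eta$ both branches equal $\tfrac{2V_u^{2}}{L_1}\sin\bar\eta\,\sign(\eta)$, and that on the feasible set $\mathcal F$ of Definition~\ref{def:feasible-set} one has $L_1\ge L_0(d)\sqrt{1+d\kappa}>0$ and $|a_\mathrm{d}|\le V_u^{2}/\Rmin$; hence the right-hand side of \eqref{eq: error dynamics} is locally Lipschitz and bounded in the angular variables, so a unique local solution exists and can only fail to extend through $d$ leaving the feasible range ($1+d\kappa\to0$ or $L_1\to 2R-d$). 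This last possibility I would rule out jointly with step (c), whose Lyapunov function has sublevel sets confined to $\mathcal F$ and bounded in $d$, thereby delivering forward completeness.

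For (b) I would invoke Lemma~\ref{lem:Tfar}: on $S_2\cup S_3$ the saturated command gives $|\dot\eta|\ge V_u/\Rmin$ with $\eta$ driven monotonically toward $\pm\bar\eta(d)$, so the trajectory re-enters $S_1$ within $T_{\mathrm{far}}\le(\Rmin/V_u)\,(|\eta(0)|-\bar\eta(d(0)))^{+}$, during which $|d|$ changes by at most $V_uT_{\mathrm{far}}$ and so stays feasible (and, by Corollary~\ref{cor:wedge}, $S_1$ is the larger wedge for the variable profile, so this exit time is no worse than for constant $L_0$). Without loss of generality the problem thus reduces to initial data in $S_1$; the remaining subtlety — that the solution might keep re-entering saturation forever without converging — I would dispatch by showing the Lyapunov function of step (c) is nonincreasing on all of $\mathcal F$ (its time derivative is continuous because $a_\mathrm{d}$ is), so successive saturated excursions are progressively shallower with summable durations, and the $\omega$-limit set lies in $S_1$.

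For (c) the candidate is $V(d,\eta)=\Phi(d)+\bigl(1-\cos\eta\bigr)$ with $\Phi$ from Lemma~\ref{lem:g_sign}, which is positive definite (indeed $\Phi(d)=\ln\!\bigl(L_1(d,\kappa)/L_1(0,\kappa)\bigr)\ge 0$, using $2g=\partial_d L_1^{2}$ from Lemma~\ref{lem:g_def}). Along \eqref{eq: error dynamics} with $a_\mathrm{d}=\tfrac{2V_u^{2}}{L_1}\sin\eta$ one has $\dot\Phi=\tfrac{g(d)}{L_1^{2}}\dot d=-\tfrac{V_u}{L_1}\cdot\tfrac{g(d)}{L_1}\sin\eta_2$; substituting Lemma~\ref{lem:key_identity} for $g(d)/L_1$ together with $\sin\eta_1=d/L_1$, $\cos\eta_1=L_0\sqrt{1+d\kappa}/L_1$ and $\eta=\eta_2-\eta_1$, the non-sign-definite cross terms coming from $\dot\Phi$ and from $\sin\eta\,\dot\eta$ should cancel and leave $\dot V$ equal to a negative multiple of $\sin^{2}$ of a heading-type angle, hence $\dot V\le 0$. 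Then $\{V\le V(d(0),\eta(0))\}$ is forward invariant and bounded in $d$ (the feasibility inequality $L_1\le 2R-d$ already confines $d$ when $\kappa\neq 0$, while $\Phi$ is radially unbounded when $\kappa=0$), which completes forward completeness; LaSalle's principle then confines the $\omega$-limit set to $\{\dot V=0\}$, where on the largest invariant subset the heading error is stationary, so $a_\mathrm{d}\equiv0$ and $\dot d\equiv0$, and combining this with the steering relation and $\sin\eta_1=d/L_1$ forces $d\equiv0$, $\eta\equiv0$. Hence $(d(t),\eta(t))\to(0,0)$.

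The hard part will be the exact algebraic cancellation in step (c): the weight on the heading term must be picked so that, after inserting Lemma~\ref{lem:key_identity} and the kinematic relations, every term that is not sign-definite drops out; it may be necessary to replace the plain $1-\cos\eta$ by a curvature- or $L_1$-weighted surrogate, and the bookkeeping among $\eta,\eta_1,\eta_2$ and the orientation convention of \eqref{eq: eta computed} has to be carried out carefully. The secondary obstacle is the hybrid structure — excluding chattering on $\{|\eta|=\bar\eta(d)\}$ (no sliding mode is forced since $a_\mathrm{d}$ is continuous there) and showing the saturated excursions vanish — which is resolved once the global monotonicity of $V$ and the finite exit-time bound of Lemma~\ref{lem:Tfar} are used together.
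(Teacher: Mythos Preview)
Your proposal is correct and follows essentially the same three-step route as the paper (finite-time exit from $S_2\cup S_3$, positive invariance of $S_1$, Lyapunov/LaSalle in $S_1$ built on $\Phi(d)$ from Lemma~\ref{lem:g_sign} together with Lemma~\ref{lem:key_identity}). The one place you left open---the heading-angle term in the Lyapunov candidate---is resolved in the paper by taking $\mathcal V=\tfrac12 V_u^{2}\sin^{2}\eta+V_u^{2}\Phi(d)$ rather than $(1-\cos\eta)+\Phi(d)$; with this choice the cross terms cancel to give $\dot{\mathcal V}=-\tfrac{2V_u^{3}}{L_1}\sin^{2}\eta_2\cos\eta_1\le 0$, exactly the ``negative multiple of $\sin^{2}$ of a heading-type angle'' you anticipated.
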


\begin{proof}
We present three cases that correspond to the regions $S_2\cup S_3$, the boundary $|\eta|=\bar\eta(d)$, and the unsaturated region $S_1$.

\smallskip
\noindent\emph{Case 1 (Saturated $\to$ Unsaturated in finite time):}
While $(d,\eta)\in S_2\cup S_3$, the applied acceleration has constant magnitude $|a|=V_u^2/R_{\min}$.
Hence, from \eqref{eq: error dynamics},
\begin{align}
\begin{split}
 \left|\dot\eta\right|
&=\left|\frac{a_\mathrm{d}}{V_u}+\frac{V_u}{L_1}\sin\eta\right|
\;\ge\; \frac{V_u}{R_{\min}}-\frac{V_u}{L_1}\sin\bar\eta \\
&=\frac{V_u}{R_{\min}}-\frac{V_u}{L_1}\frac{L_1}{2R_{\min}}
=\frac{V_u}{2R_{\min}}.    
\end{split} 
\label{eq: step A}
\end{align}
Therefore the angular distance from $|\eta(0)|$ to $\bar\eta\big(d(0)\big)$ is covered in finite time given by Eq.~\eqref{eq: T far}.
Thus every trajectory starting in $S_2\cup S_3$ enters $S_1$ in finite time.

\smallskip
\noindent\emph{Case 2 (Invariance of $S_1$):}
On $|\eta|=\bar\eta(d)$, using the saturated law,
\begin{align}
\dot\eta=\frac{2V_u}{L_1}\sin\bar\eta\,\mathrm{sgn}(\eta)+\frac{V_u}{L_1}\sin\eta
=\frac{V_u}{L_1}\big(\sin\bar\eta\,\mathrm{sgn}(\eta)+\sin\eta\big)   \end{align}

which is zero at $\eta=\pm\bar\eta$ and points toward the interior for small excursions beyond the boundary. Hence $S_1$ is positively invariant.

\smallskip
\noindent\emph{Case 3 (Lyapunov analysis in $S_1$):}
Inside $S_1$ the guidance law satisfy $a_\mathrm{d}=\tfrac{2V_u^2}{L_1}\sin\eta$. Define the Lyapunov function as
% \begin{align}
% \begin{split}
%   \mathcal V(d,\eta_2)&=\tfrac12 V^2\sin^2\eta_2 + V^2\Phi(d),
% \\
% \Phi'(d)& =\frac{g(d)}{L_1(d,\kappa)^2},\quad \Phi(0)=0   
% \end{split}   
% \end{align}
\begin{equation}
\mathcal V(d,\eta)=\tfrac12 V_u^2\sin^2\eta\ +\ V_u^2\Phi(d),
\label{eq:V-def}
\end{equation}
which is positive definite on $\mathcal F$ with unique minimum at $(d,\eta)=(0,0)$.
Differentiate $\mathcal V$ in Eq.\eqref{eq:V-def} along Eq.\eqref{eq: error dynamics} and on simplifying using Lemma~\ref{lem:g_def}-\ref{lem:key_identity} results in
%\begin{align}
   % \dot{\mathcal V}=-\frac{2V_u^3}{L_1(d,\kappa)}\,\sin^2\eta_2\,\cos\bigl(\eta_1+\eta_2\bigr). 
%\end{align}
\begin{align}
\dot{\mathcal V}=-\frac{2V_u^3}{L_1(d,\kappa)}\sin^2\eta_2\cos\eta_1 \le 0.    
\end{align}
In the feasible set, $s\ge 0$ so $\cos\eta_1=s/L_1(d,\kappa) \ge 0$, equality holds only at $\sin\eta=0$. Hence, $\mathcal V$ is nonincreasing in $S_1$.
By LaSalle’s invariance principle, the $\omega$–limit set is contained in $\{\sin\eta=0\}\cap S_1$, i.e., $\eta=0$.
On $\eta=0$ we have $\dot d=0$ and $\dot{\mathcal V}=0$, which implies $g(d)=0$. Since $g(d)$ has the sign of $d$,
this implies $d=0$. Thus, the only invariant set with $\dot{\mathcal V}=0$ is $(d,\eta)=(0,0)$, and hence the
equilibrium is asymptotically stable. 
\smallskip
Combining Steps A–C yields forward completeness and $(d(t),\eta(t))\to(0,0)$ for all initial
conditions in $\mathcal F$.
\end{proof}

\section{Simulation Results} \label{sec:simulations}

This section discusses numerical simulation studies for constant $L_0$ and varying $L_0$ guidance methods, while following straight-line and elliptic paths.
 In doing so, three performance indices, such as settling time, the control effort, and peak overshoot, are analyzed with variation in tunable parameters $(L_\mathrm{min}, L_{\mathrm{max}},d_c)$ over a total run time $t_{\mathrm{f}}$. Unless stated, the speed of the UAV is considered as $V_u = 12$ m/s. Simulation parameters for constant  $L_0 =L_\mathrm{min}$ and  the variable $L_0$ as governed by Eq.~\eqref{eq: variable L0 guidance profile},  are listed in Table~\ref{tab:guidance_params}. 

\begin{table}[ht]
  \captionsetup{width=\textwidth}
  \caption{Simulation parameters }
  \label{tab:guidance_params}
 % \scriptsize
  \begin{tabular}{@{}lccc@{}}
    \toprule
    \textbf{Case}
      & \textbf{Initial conditions}
      & \textbf{Constant\ $L_0$}
      & \textbf{Variable\ $L_0$} \\
    \textbf{study}&  $(x_0,y_0)$ m, $\psi_0$ deg
      & $L_0 = L_{\min} $ m
      & $(L_{\max}$, $d_c)$ m \\
    \midrule
    Straight line
      & $(-150,\,50),\;90$
      & $40$
      & $82,\;32$ \\
    Elliptic path
      & $(250,\,120),\;150$
      & $22$
      & $100,\;20$ \\
    \bottomrule
  \end{tabular}
\end{table}

\begin{figure}[htbp]
\begin{subfigure}[b]{.25\textwidth}
	\centering			\includegraphics[width=\linewidth]{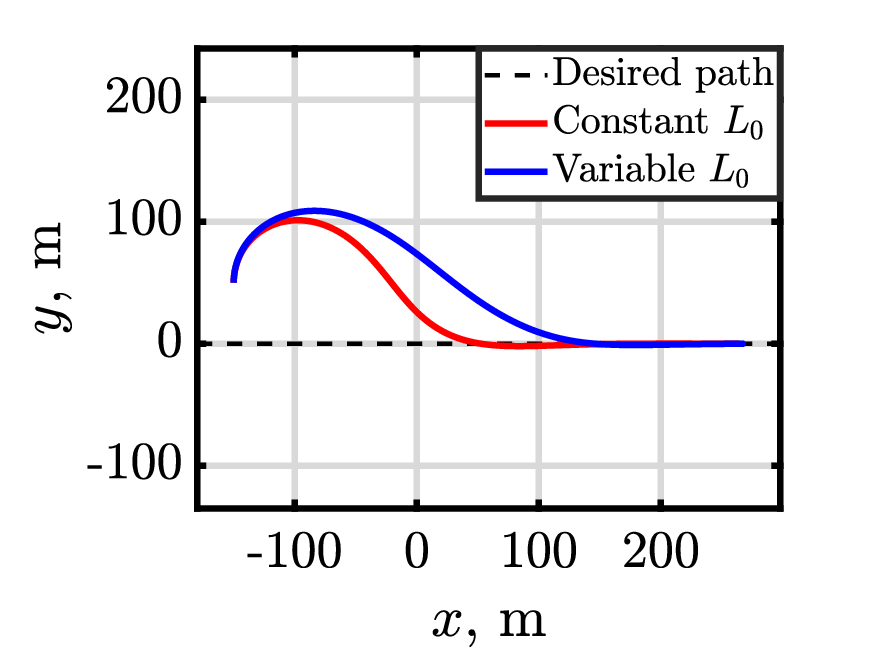}
	\caption{ UAV trajectory}		\label{fig:  UAV trajectory line}       
\end{subfigure}%
\begin{subfigure}[b]{.25\textwidth}
	\centering			\includegraphics[width=\linewidth]{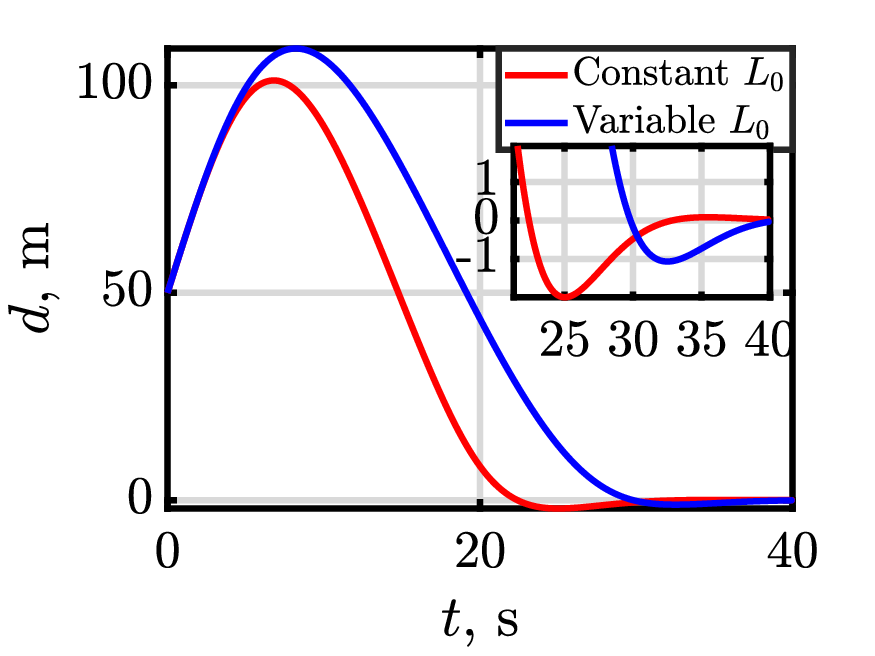}
	\caption{Tracking error}         \label{fig: Tracking error line}
\end{subfigure}%
\qquad
\begin{subfigure}[b]{.25\textwidth}
	\centering			\includegraphics[width=\linewidth]{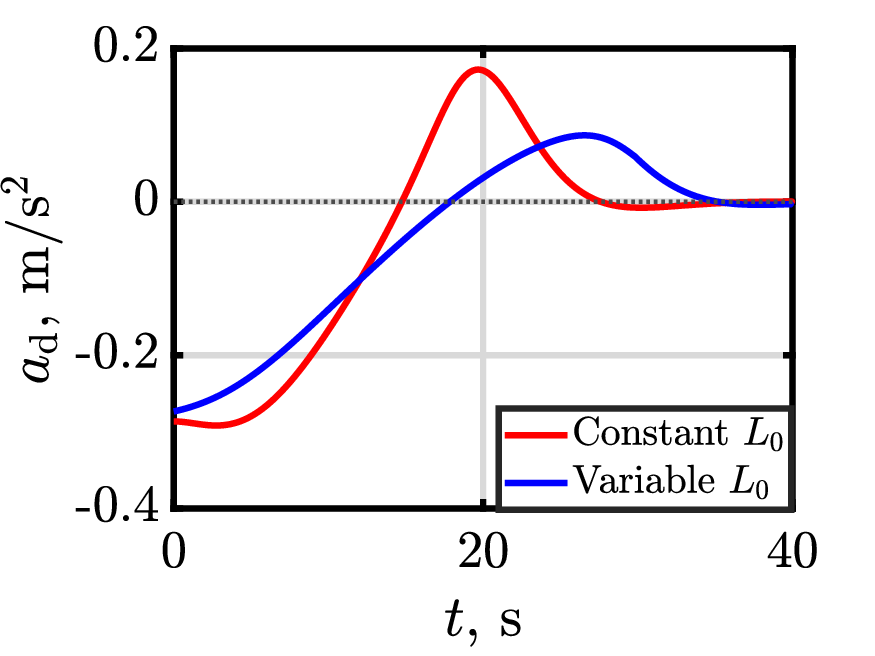}
	\caption{Lateral acceleration}         \label{fig: Lateral acceleration line}
\end{subfigure}%
\begin{subfigure}[b]{.25\textwidth}
	\centering
	\includegraphics[width=\linewidth]{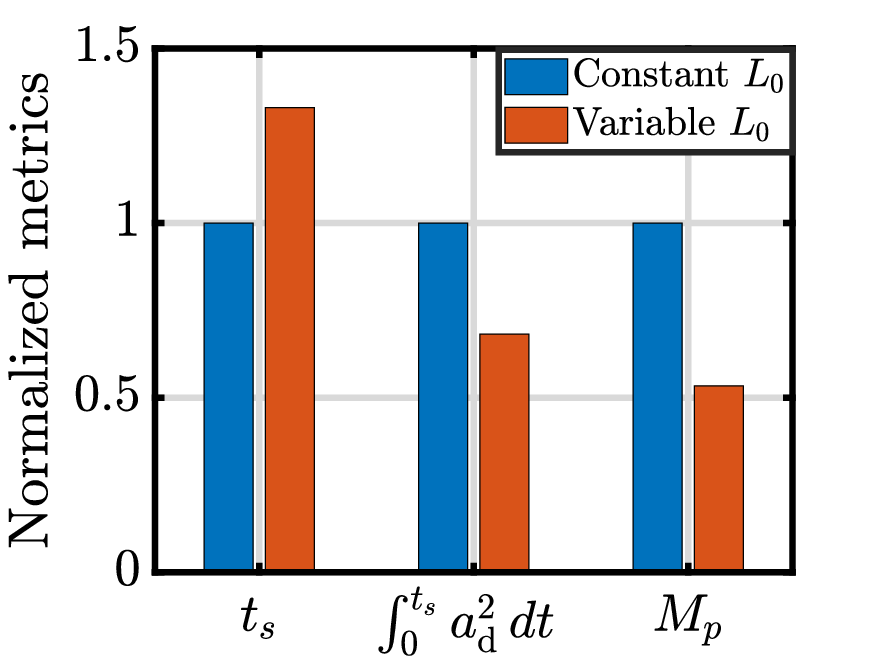}
	\caption{ Performance comparison}			\label{fig: Performance comparison line}       
\end{subfigure}%
\caption{Results for straight line following}
\label{fig: Results for straight line following}
\end{figure}
In the case study of straight line path following, the desired path is described by $y=0$. The numerical simulation results are shown in Fig.~\ref{fig: Results for straight line following}. Fig.~\ref{fig:  UAV trajectory line} plots the UAV trajectory, and the corresponding tracking error is plotted in Fig.~\ref{fig: Tracking error line}. While both methods reduces the tracking error  to zero and eventually align with the desired line, the variable $L_0$ approach exhibits significantly smaller overshoot in comparison to constant $L_0$ as indicated in Fig.~\ref{fig: Tracking error line}. The commanded lateral acceleration is depicted in Fig.~\ref{fig: Lateral acceleration line}.  It is noteworthy to highlight that, the lateral acceleration profile demonstrates a substantial reduction in control effort, as reflected in both the lower peak acceleration as in Fig.~\ref{fig: Lateral acceleration line} and the reduced integrated squared acceleration metric indicated in Fig.~\ref{fig: Performance comparison line}. 

% By adapting the lookahead distance according to the instantaneous tracking geometry, the variable $L_0$
%  method avoids unnecessary aggressive maneuvers far from the path while still ensuring precise convergence near the line. Overall, these results confirm that the adaptive lookahead strategy enhances transient performance while significantly lowering control demand.
\begin{figure}[htbp]
\begin{subfigure}[b]{.25\textwidth}
	\centering			\includegraphics[width=\linewidth]{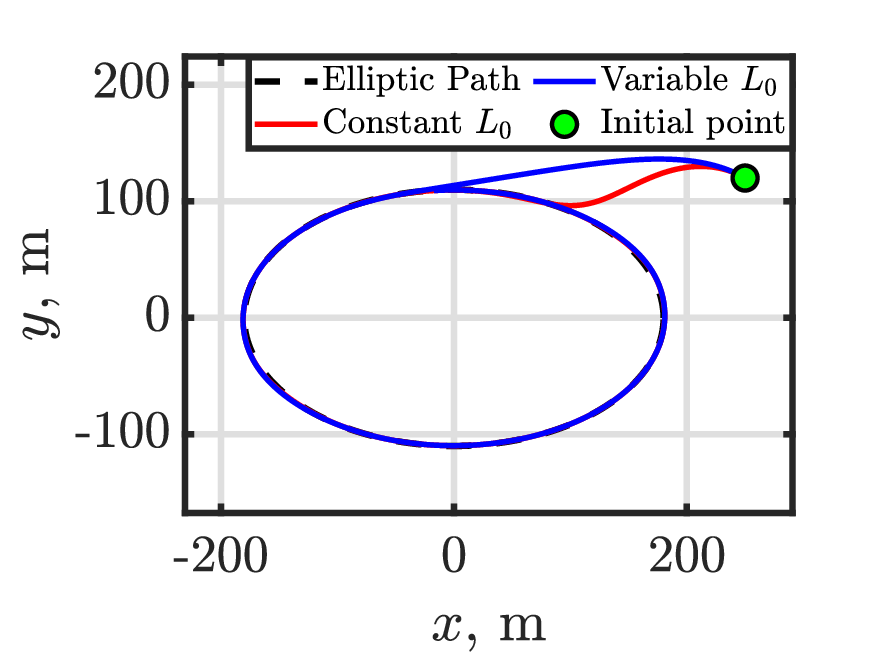}
	\caption{ UAV trajectory}		\label{fig:  UAV trajectory ellipse}       
\end{subfigure}%
\begin{subfigure}[b]{.25\textwidth}
	\centering			\includegraphics[width=\linewidth]{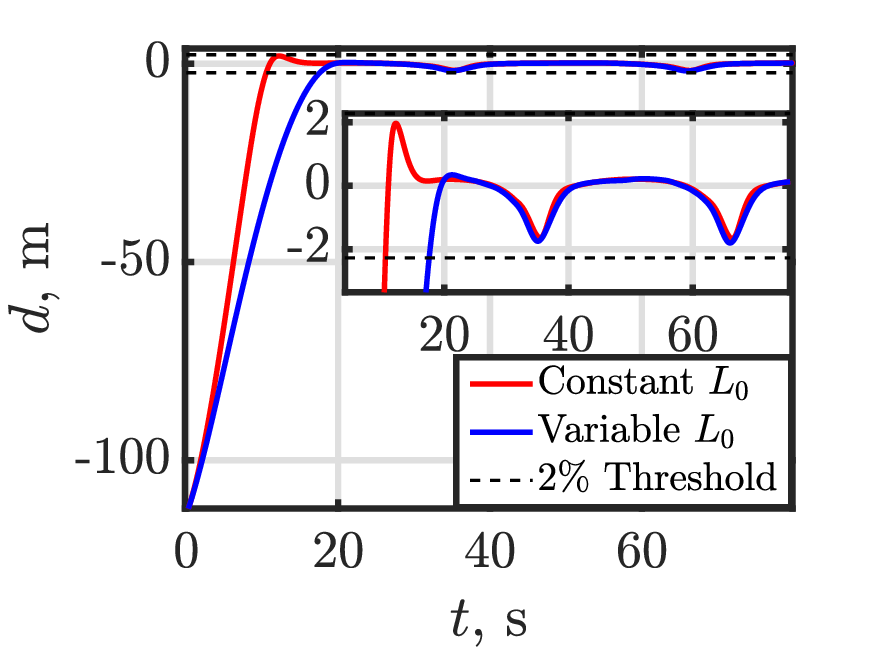}
	\caption{Tracking error}         \label{fig: Tracking error ellipse}
\end{subfigure}%
\qquad
\begin{subfigure}[b]{.25\textwidth}
	\centering			\includegraphics[width=\linewidth]{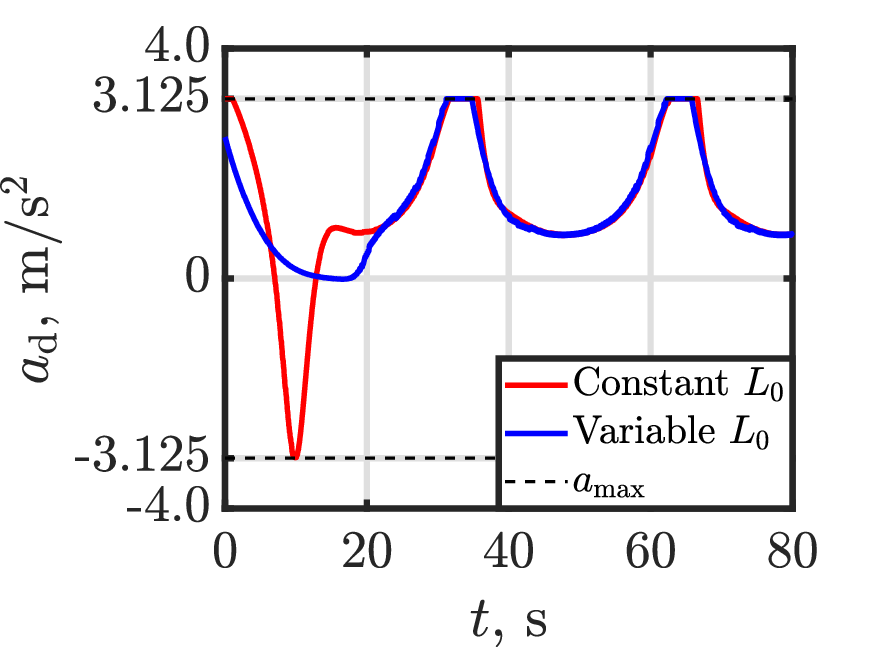}
	\caption{Lateral acceleration}         \label{fig: Lateral acceleration ellipse}
\end{subfigure}%
\begin{subfigure}[b]{.25\textwidth}
	\centering			\includegraphics[width=\linewidth]{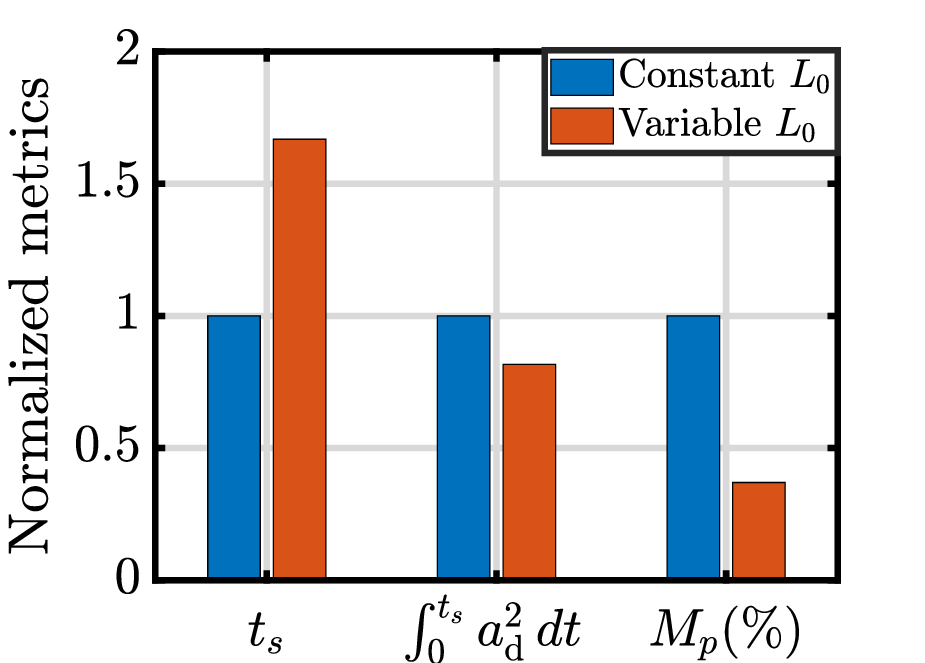}
	\caption{Performance comparison}         \label{fig: Performance comparison ellipse}
\end{subfigure}%
\caption{Results for elliptic path following}
\label{fig: Results for elliptic path following}
\end{figure}
For an elliptic path following case study, the desired path is charactersied by $\frac{x^2}{180^2} + \frac{110^2}{b^2} = 1$. Simulation results are illustrated in Fig.~\ref{fig: Results for elliptic path following}. Fig.~\ref{fig:  UAV trajectory ellipse} shows the UAV trajectory, and the tracking error profile is depicted in Fig.~\ref{fig: Tracking error ellipse}. As seen in Fig.~\ref{fig: Tracking error ellipse}, the variable $L_0$ guidance reduces the tracking error to negligibly low value and exhibits significantly smaller overshoot. This improvement is achieved while maintaining a smoother lateral acceleration profile, as shown in Fig.~\ref{fig: Lateral acceleration ellipse}, where the constant $L_0$ case produces larger and more frequent acceleration peaks due to its inability to adapt the look-ahead distance. The performance metrics in Fig.~\ref{fig: Performance comparison ellipse} establish that the variable $L_0$ guidance yields reduced control effort and overshoot, demonstrating that adaptively increasing the look-ahead distance for large initial deviations effectively prevents saturation and improves transient response. 
\vspace{-10.2pt}
\section{Conclusions}\label{sec:Conclusions}
This work presents an adaptive look-ahead guidance strategy that enhances UAV path-following performance by exploiting the geometric relationship between tracking errors and the look-ahead distance. By characterizing the saturated and unsaturated regions in the $(d,\eta)$ error space, the proposed variable $L_0$ 
method expands the unsaturated operating envelope compared with the constant $L_0$ baseline. A key contribution of this study is the ratio analysis of $L_\mathrm{max}/L_\mathrm{min}$, which quantifies how the envelope expands as this ratio increases. The results show that for ratios greater than 3, the unsaturated region grows by more than 70\%, enabling the guidance law to remain unsaturated over a significantly broader range of tracking errors. This enlarged operational region directly contributes to smoother trajectory corrections, earlier exit from saturation, and reduced overall control effort.
 Simulation studies on straight-line and elliptic paths confirm that dynamically adjusting the look-ahead distance yields smooth converging behaviour, reduced overshoot, and a substantial reduction in control effort. These improvements demonstrate the practical significance of incorporating adaptive look-ahead mechanisms into UAV guidance design, offering a computationally efficient approach that enhances maneuvering efficiency and robustness under realistic actuator and curvature constraints paths.
  These insights also open research directions for extending variable-parameter guidance laws to more complex path geometries and higher-order vehicle models.
\vspace{-10.2pt}
\bibliography{PF_ECC24}

@inproceedings{fernandes_path-following_2024,
	title = {A Path-Following Guidance Method for Nonholonomic Vehicles with a Large Domain of Attraction for Straight and Curved Paths},
author = {Fernandes, Manuel C. R. M. and Fontes, Fernando A. C. C.},		
	booktitle = {2024 European Control Conference ({ECC})},
	month = {jun},
	year = {2024},
	pages = {621--626},
    doi = {10.23919/ECC64448.2024.10590773}
}

@article{park_performance_2007,
	title = {Performance and {Lyapunov} {Stability} of a {Nonlinear} {Path} {Following} {Guidance} {Method}},
	volume = {30},	
	number = {6},
	journal = {Journal of Guidance, Control and Dynamics},
	author = {Park, Sanghyuk and Deyst, John and How, Jonathan},
	year = {2007},
    doi = {https://doi.org/10.2514/1.28957}
}

@inproceedings{fernandes_model_2020,
	address = {Berlin},
	title = {A Model Predictive Control Scheme to Improve Performance of a Path-Following Controller for Airborne Wind Energy},
	author = {Fernandes, Manuel C. R. M. and Paiva, Luís Tiago and Fontes, Fernando A. C. C.},
	month = {jul},
	year = {2020}
}

@Article{l0l1_energies,
AUTHOR = {Fernandes, Manuel C. R. M. and Vinha, Sérgio and Paiva, Luís Tiago and Fontes, Fernando A. C. C.},
TITLE = {L0 and {L1} Guidance and Path-Following Control for Airborne Wind Energy Systems},
JOURNAL = {Energies},
VOLUME = {15},
YEAR = {2022},
NUMBER = {4},
ARTICLE-NUMBER = {1390},
URL = {https://www.mdpi.com/1996-1073/15/4/1390},
ISSN = {1996-1073},
ABSTRACT = {For an efficient and reliable operation of an Airborne Wind Energy System, it is widely accepted that the kite should follow a pre-defined optimized path. In this article, we address the problem of designing a trajectory controller so that such path is closely followed. The path-following controllers investigated are based on a well-known nonlinear guidance logic termed L1 and on a proposed modification of it, which we termed L0. We have developed and implemented both L0 and L1 controllers for an AWES. The two controllers have an easy implementation with an explicit expression for the control law based on the cross-track error, on the heading angle relative to the path, and on a single parameter L (L0 or L1, depending on each controller) that we are able to tune. The L0 controller has an even easier implementation since the explicit control law can be used without the need to switch controllers. Since the switching of controllers might jeopardize stability, the L0 controller has an important theoretical advantage in being able to guarantee stability on a larger domain of attraction.The simulation study shows that both nonlinear guidance logic controllers exhibit appropriate performance when the L parameter is adequately tuned, with the L0 controller showing a better performance when measured in terms of the average cross-track error.},
DOI = {10.3390/en15041390}
}

@inproceedings{park_new_nodate,
	title = {A {New} {Nonlinear} {Guidance} {Logic} for {Trajectory} {Tracking}},
	url = {https://arc.aiaa.org/doi/abs/10.2514/6.2004-4900},
	doi = {DOI: 10.2514/6.2004-4900},
	urldate = {2018-04-27},
	booktitle = {{AIAA} {Guidance}, {Navigation}, and {Control} {Conference} and {Exhibit}},
	publisher = {American Institute of Aeronautics and Astronautics},
	author = {Park, Sanghyuk and Deyst, John and How, Jonathan},
	year = {2004},
	doi = {10.2514/6.2004-4900}
}

@inproceedings{silva_path-following_nodate,
	title={A Path-following Guidance Method for Airborne Wind Energy Systems with Large Domain of Attraction},
  author={Silva, Gon{\c{c}}alo B and Paiva, Lu{\'\i}s Tiago and Fontes, Fernando ACC},
  booktitle={2019 American Control Conference (ACC)},
  pages={2771--2776},
  year={2019},
  organization={IEEE}
}

@article{ratnoo2015path,
		author = {Ratnoo, Ashwini and Hayoun, Shmuel Y. and Granot, Asaf and Shima, Tal},
		title = "{Path Following Using Trajectory Shaping Guidance}",
		journal = {Journal of Guidance, Control, and Dynamics},
		volume = {38},
		number = {1},
		pages = {106-116},
		year = {2015},
		doi = {10.2514/1.G000300}
	}

@article{nelson2007,
		title={Vector Field Path Following for Miniature Air Vehicles},
		author={Nelson, Derek R and Barber, D Blake and McLain, Timothy W and Beard, Randal W},
		journal={IEEE Transactions on Robotics},
		volume={23},
		number={3},
		pages={519--529},
		year={2007},
		doi = {10.1109/TRO.2007.898976}
	}

@article{shivam2023,
		title = "{Arcsine Vector Field for Path Following Guidance}",
		author = {Shivam, Amit and Ratnoo, Ashwini},	
		journal = {Journal of Guidance, Control, and Dynamics},
		year = {2023},
		volume = {46},
		number = {12},
		pages = {2409--2420},
		doi = {10.2514/1.G007562}
	}

@inproceedings{curry2013l+,
  title={$L_2^+$, an improved line of sight guidance law for UAVs},
  author={Curry, Renwick and Lizarraga, Mariano and Mairs, Bryant and Elkaim, Gabriel Hugh},
  booktitle={2013 American Control Conference},
  pages={1--6},
  year={2013},
  organization={IEEE}
}

@article{stastny2018l1,
  title={L1 guidance logic extension for small UAVs: handling high winds and small loiter radii},
  author={Stastny, Thomas},
  journal={arXiv preprint arXiv:1804.04209},
  year={2018}
}

@article{frew2008coordinated,
		author = {Frew, Eric W. and Lawrence, Dale A. and Morris, Steve},
		title = "{Coordinated Standoff Tracking of Moving Targets Using Lyapunov Guidance Vector Fields}",
		journal = {Journal of Guidance, Control, and Dynamics},
		volume = {31},
		number = {2},
		pages = {290-306},
		year = {2008},
		doi = {10.2514/1.30507}
	}

@article{pothen2017curvature,
		title="{Curvature-Constrained Lyapunov Vector Field for Standoff Target Tracking}",
		author={Pothen, Abin Alex and Ratnoo, Ashwini},
		journal={Journal of Guidance, Control, and Dynamics},
		volume={40},
		number={10},
		pages={2729--2736},
		year={2017},
		doi={10.2514/1.G002281}	
	}

@inproceedings{goncalves2009artificial,
		title="{Artificial Vector Fields for Robot Convergence and Circulation of Time-Varying Curves in n-Dimensional Spaces}",
		author={Goncalves, Vinicius M and Pimenta, Luciano CA and Maia, Carlos A and Pereira, Guilherme AS},
		booktitle={Proceedings IEEE American Control Conference, June},
		year={2009},
		pages={2012--2017},
		doi={10.1109/ACC.2009.5160350}
	}

@inproceedings{thakar2017tangential,
		title="{A Tangential Guidance Logic for Virtual Target Based Path Following}",
		author={Thakar, Shantanu and Ratnoo, Ashwini},
		booktitle={AIAA Guidance, Navigation, and Control Conference,AIAA Paper 2017--1042, January},
		year={ 2017},
		doi={10.2514/6.2017-1042}
	}

@incollection{osborne2005waypoint,
		title={Waypoint Guidance for Small UAVs in Wind},
		author={Osborne, John and Rysdyk, Rolf},
		booktitle={AIAA Guidance, Navigation, and Control Conference, AIAA Paper 2005-6951, September},
		year={2005},
		doi={10.2514/6.2005-6951}
	}

@article{FERNANDES2025128,
author = {Manuel C.R.M. Fernandes and Amit Shivam and Lorenzo Fagiano and Fernando A.C.C. Fontes},
title = {Varying L0 Path-Following Guidance for Airborne Wind Energy Systems},
journal = {IFAC-PapersOnLine},
volume = {59},
number = {16},
pages = {128-133},
year = {2025},
note = {11th IFAC Symposium on Robust Control Design ROCOND 2025},
doi = {https://doi.org/10.1016/j.ifacol.2025.10.091}
}

@INPROCEEDINGS{Alessandretti2013,
  author={Alessandretti, Andrea and Aguiar, A. Pedro and Jones, Colin N.},
  booktitle={2013 European Control Conference (ECC)}, 
  title={Trajectory-tracking and path-following controllers for constrained underactuated vehicles using Model Predictive Control}, 
  year={2013},
  volume={},
  number={},
  pages={1371-1376},
  keywords={Vehicles;Trajectory;Optimization;Polynomials;Robots;Solid modeling},
  doi={10.23919/ECC.2013.6669717}}

@article{kapitanyuk2017guiding,
	title={Guiding {V}ector {F}ield {A}lgorithm for a {M}oving {P}ath {F}ollowing {P}roblem},
	author={Kapitanyuk, Yuri A and de Marina, Hector Garcia and Proskurnikov, Anton V and Cao, Ming},
	journal={IFAC-PapersOnLine},
	volume={50},
	number={1},
	pages={6983--6988},
	publisher={Elsevier},
	year={2017},	
	}

@article{hota2014curvature,
  title={Curvature-constrained trajectory generation for waypoint following for miniature air vehicle},
  author={Hota, Sikha and Ghose, Debasish},
  journal={Proceedings of the Institution of Mechanical Engineers, Part G: Journal of Aerospace Engineering},
  volume={228},
  number={11},
  pages={2066--2082},
  year={2014},
  publisher={SAGE Publications Sage UK: London, England}
}

@INPROCEEDINGS{amitICUAS,
  author={Shivam, Amit and Ratnoo, Ashwini},
  booktitle={2021 International Conference on Unmanned Aircraft Systems (ICUAS)}, 
  title={Curvature-Constrained Vector Field for Path Following Guidance}, 
  year={2021},
  volume={},
  number={},
  pages={853-857},
  keywords={Closed-form solutions;Simulation;Atmospheric modeling;Kinematics;Unmanned aerial vehicles;Path planning;Aircraft;Path Planning},
  doi={10.1109/ICUAS51884.2021.9476741}
}

@article{shashiranjanAST,
title = {Robust path-following guidance for an autonomous vehicle in the presence of wind},
journal = {Aerospace Science and Technology},
volume = {150},
pages = {109225},
year = {2024},
issn = {1270-9638},
doi = {https://doi.org/10.1016/j.ast.2024.109225},
author = {Saurabh Kumar and Abhinav Sinha and Shashi Ranjan Kumar}}

@article{lee2010LQR,
  title={Integrated waypoint path generation and following of an unmanned aerial vehicle},
  author={Lee, Sanghyo and Cho, Am and Kee, Changdon},
  journal={Aircraft Engineering and Aerospace Technology},
  volume={82},
  number={5},
  pages={296--304},
  year={2010},
  publisher={Emerald Group Publishing Limited}
}

@inproceedings{breivik2005principles,
  title={Principles of guidance-based path following in 2D and 3D},
  author={Breivik, Morten and Fossen, Thor I},
  booktitle={Proceedings of the 44th IEEE Conference on Decision and Control},
  pages={627--634},
  year={2005},
  organization={IEEE}
}

@article{lekkas2012time,
  title={A time-varying lookahead distance guidance law for path following},
  author={Lekkas, Anastasios M and Fossen, Thor I},
  journal={IFAC Proceedings Volumes},
  volume={45},
  number={27},
  pages={398--403},
  year={2012},
  publisher={Elsevier}
}

@article{lekkas2014integral,
  title={Integral LOS path following for curved paths based on a monotone cubic Hermite spline parametrization},
  author={Lekkas, Anastasios M and Fossen, Thor I},
  journal={IEEE Transactions on Control Systems Technology},
  volume={22},
  number={6},
  pages={2287--2301},
  year={2014},
  publisher={IEEE}
}

\end{document}